\newcommand{\E}{\ensuremath{\mathbb{E}}}
\newcommand{\occ}{\ensuremath{{\rm occ}}}
\newcommand{\dist}{\ensuremath{{\rm dist}}}
\newcommand{\flin}{\ensuremath{f_{\rm LIN}}}
\newcommand{\focc}{\ensuremath{f_{\rm occ}}}
\newcommand{\var}{\ensuremath{{\rm vbl}}}
\newcommand{\NNnull}{\ensuremath{\mathbb{N}_0}}
\newcommand{\ignore}[1]{}
\begin{document}
 \title{Unsatisfiable Linear CNF Formulas Are Large and 
 Complex}
 \author{D. Scheder}{Dominik Scheder}
 \address{ETH Z\"urich, Institute of Theoretical Computer Science\newline
 Universit\"atstrasse 6, CH-8092 Z\"urich, Switzerland}
 \email{dscheder@inf.ethz.ch}
\thanks{Research is supported by the SNF Grant 200021-118001/1}
 \keywords{Extremal Combinatorics, Proof Complexity, Probabilistic Method}
 \subjclass{Computational and structural complexity}


\begin{abstract}
  We call a CNF formula {\em linear} if any two clauses have at most
  one variable in common. We show that there exist unsatisfiable
  linear $k$-CNF formulas with at most $4k^24^k$ clauses, and on the
  other hand, any linear $k$-CNF formula with at most
  $\frac{4^k}{8e^2k^2}$ clauses is satisfiable.  The upper bound uses
  probabilistic means, and we have no explicit construction coming
  even close to it. One reason for this is that unsatisfiable linear
  formulas exhibit a more complex structure than general (non-linear)
  formulas: First, any treelike resolution refutation of any
  unsatisfiable linear $k$-CNF formula has size at least
  $2^{2^{\frac{k}{2}-1}}$. This implies that small unsatisfiable
  linear $k$-CNF formulas are hard instances for Davis-Putnam style
  splitting algorithms. Second, if we require that the formula $F$ have
  a {\em strict} resolution tree, i.e. every clause of $F$ is used
  only once in the resolution tree, then we need at least
  $a^{a^{\iddots^a}}$ clauses, where $a \approx 2$ and the height of
  this tower is roughly $k$.
\end{abstract}

\maketitle

\section{Introduction}

How can CNF formulas become unsatisfiable? Roughly speaking, there are
two ways: Either some constraint (clause) is itself impossible to
satisfy -- the empty clause; or, every clause can be satisfied
individually, but one cannot satisfy all of them simultaneously.  In
the latter case, the clauses have to somehow overlap.  How much?  For
example, take $k$ boolean variables $x_1,\dots,x_k$.  The conjunction
of all $2^k$ possible clauses of size $k$ is the {\em complete $k$-CNF
  formula} and denote by $\mathcal{K}_k$. It is unsatisfiable, and as
small as possible: Any $k$-CNF formula with less than $2^k$ clauses is
satisfiable. Clearly, the clauses of $\mathcal{K}_k$ overlap a lot.
What if we require that any two distinct clauses share at most one
variable? We call such a formula {\em linear}. There are unsatisfiable
linear $k$-CNF formulas, but they are significantly larger and have a
much more complex structure
than $\mathcal{K}_k$.\\

A CNF formula is a conjunction (AND) of {\em clauses}, and a clause is
a disjunction (OR) of {\em literals}. A literal is either a boolean
variable $x$ or its negation $\bar{x}$. We require that a clause does
not contain the same literal twice, and does not contain complementary
literals, i.e., both $x$ and $\bar{x}$. To simplify notation, we also
regard formulas as sets of clauses and clauses as sets of literals. A
clause with $k$ literals is a $k$-clause, and a $k$-CNF formula is a
CNF formula consisting of $k$-clauses. For a clause $C$, we denote by
$\var(C)$ set of variables $x$ with $x \in C$ or $\bar{x} \in C$.
Consequently, a CNF formula $F$ is {\em linear} if $|\var(C) \cap
\var(D)| \leq 1$ for any two distinct clauses $C,D \in F$. As a
relaxation of this notion, we call $F$ {\em weakly linear} if $|C \cap
D| \leq 1$ for any distinct $C,D \in F$.\\

{\em Example.} The formula $(\bar{x}_1 \vee x_2) \wedge (\bar{x}_2
\vee x_3) \wedge (x_3 \vee x_4) \wedge (\bar{x}_4 \vee \bar{x}_1) $ is
linear, whereas $(\bar{x}_1 \vee x_2) \wedge (x_1 \vee x_2) \wedge
(x_2 \vee x_3)$ is weakly linear, but not linear, and finally
$(x_1\vee x_2 \vee x_3) \wedge (x_1 \vee x_2 \vee \bar{x}_3)$ is not
weakly linear (and not linear, either).\\

It is not very difficult to construct an unsatisfiable linear $2$-CNF
formula, but significantly more effort is needed for a $3$-CNF
formula. It is not obvious whether unsatisfiable linear $k$-CNF
formulas exist for every $k$. These questions have been asked first by
Porschen, Speckenmeyer and Randerath~\cite{PRS06}, who also proved
that for any $k\geq 3$, if an unsatisfiable linear $k$-CNF formula
exists, then deciding satisfiability of linear $k$-CNF formulas is
NP-complete. Later, Porschen, Speckenmeyer and Zhao~\cite{PSZ08} and,
independently, myself~\cite{Scheder07} gave a construction of
unsatisfiable linear $k$-CNF formulas, for every $k \in \NNnull$:
\begin{theorem}[\cite{PSZ08}, \cite{Scheder07}]
  \label{theorem-existence}
  For every $k \geq 0$, there exists an unsatisfiable linear $k$-CNF
  formula $F_k$, with $F_0$ containing one clause and $F_{k+1}$
  containing $|F_k| 2^{|F_k|}$ clauses.
\end{theorem}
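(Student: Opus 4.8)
The plan is to prove Theorem~\ref{theorem-existence} by induction on $k$, exhibiting the formulas $F_k$ through an explicit recursion. For the base case $k=0$, a $0$-clause contains no literals, i.e.\ it is the empty clause, and the formula $F_0$ consisting of that single empty clause is (vacuously) linear and unsatisfiable, with $|F_0|=1$. All the work is in the inductive step: given an unsatisfiable linear $k$-CNF formula $F_k$ with clauses $C_1,\dots,C_m$ (so $m=|F_k|$), I will build an unsatisfiable linear $(k+1)$-CNF formula $F_{k+1}$ with exactly $m2^m$ clauses.

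Here is the construction I would use (essentially that of~\cite{PSZ08,Scheder07}). Introduce $m$ fresh variables $y_1,\dots,y_m$, one per clause index of $F_k$. For each bit string $\beta\in\{0,1\}^m$ take a variable-disjoint copy $F_k^{(\beta)}$ of $F_k$, with clauses $C_1^{(\beta)},\dots,C_m^{(\beta)}$, and set
\[
  F_{k+1} \;=\; \bigcup_{\beta\in\{0,1\}^m}
    \bigl\{\, C_i^{(\beta)}\vee \ell_i^{\beta} \;:\; 1\le i\le m \,\bigr\},
  \qquad
  \ell_i^{\beta} := \begin{cases} y_i & \text{if } \beta_i=0,\\ \bar y_i & \text{if } \beta_i=1.\end{cases}
\]
Each clause of the block $F_k^{(\beta)}$ is thus padded by a single literal over the shared variables $y_1,\dots,y_m$, chosen so that setting $(y_1,\dots,y_m)=\beta$ falsifies every padding literal inside that block. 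Every clause of $F_{k+1}$ is a genuine $(k+1)$-clause (a $k$-clause over copy variables together with one literal over a fresh $y$-variable, hence no repeated and no complementary literals), and the clause count is $2^m\cdot m = |F_k|\,2^{|F_k|}$, as required.

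It then remains to check the two substantive properties. \emph{Unsatisfiability:} every total assignment restricts $(y_1,\dots,y_m)$ to some $\beta\in\{0,1\}^m$; by construction this makes all padding literals of the block $F_k^{(\beta)}$ false, so that block collapses to a renamed copy of $F_k$, which is unsatisfiable by the induction hypothesis — hence so is $F_{k+1}$. \emph{Linearity:} two clauses in the same block $F_k^{(\beta)}$ have copy-variable parts sharing at most one variable (linearity of $F_k$), while their padding literals lie on distinct $y$-variables, so together they share at most one variable; two clauses from distinct blocks $\beta\neq\beta'$ have disjoint copy-variable parts, so any common variable must lie in $\{y_1,\dots,y_m\}$, and since each clause touches only one $y$-variable, they again share at most one variable.

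The construction is the whole point; the only place that needs genuine care is the linearity bookkeeping across the $2^m$ blocks, namely ruling out a pair of clauses with two common variables. The trick that makes it go through is to use one dedicated padding variable $y_i$ per \emph{clause index} rather than per block: a clause of $F_{k+1}$ then meets the shared pool $\{y_1,\dots,y_m\}$ in exactly one point, and together with the variable-disjointness of distinct blocks this keeps all pairwise variable intersections of size at most one, while still allowing each assignment to the $y$'s to isolate an entire unsatisfiable copy of $F_k$ — which is precisely what unsatisfiability needs.
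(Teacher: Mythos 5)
Your proposal is correct and is essentially the paper's construction: indexing the $2^m$ variable-disjoint copies by $\beta\in\{0,1\}^m$ and padding the $i$-th clause of block $\beta$ with $y_i$ or $\bar y_i$ is exactly the paper's scheme of adding the $j$-th literal of the $i$-th clause of the complete $m$-CNF formula $\mathcal{K}_m$ to the $j$-th clause of the $i$-th copy. The unsatisfiability and linearity arguments match the paper's (with the unsatisfiability phrased via the falsifying assignment $\beta$ rather than via satisfaction of $D_i$, which is equivalent).
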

The $|F_k|$ are extremely large. Here, we will
give an almost optimal  construction.
\ignore{
These formulas are extremely large. Printing $F_4$ would exceed the
amount of paper available in the universe. In this paper, we determine
the size of unsatisfiable linear $k$-CNF formulas rather precisely:}
\begin{theorem}
  All weakly linear $k$-CNF formulas with at most
  $\frac{4^k}{8e^2(k-1)^2}$ clauses are satisfiable. There exists an
  unsatisfiable linear $k$-CNF formula with $4k^24^k$ clauses.
  \label{theorem-bounds}
\end{theorem}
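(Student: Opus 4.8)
The two halves of the statement are unrelated, so I would prove them separately. For the first, I would draw a uniformly random truth assignment $\beta$ and bound the probability that some clause is falsified, using the Lovász Local Lemma rather than a union bound, since a union bound only yields the threshold $2^k$. The bad event ``$C$ is falsified by $\beta$'' has probability $2^{-k}$, and in its lopsided form the local lemma only lets $C$ depend on clauses $D$ that conflict with, or can be simultaneously falsified with, $C$; weak linearity ($|C\cap D|\le 1$) is what should let me control that dependency in terms of the occurrences of the literals of $C$. The catch is that such a \emph{local} bound only rules out unsatisfiable formulas of bounded occurrence, not of bounded total size. To convert it, I would first pass to a minimal unsatisfiable subformula $F'\subseteq F$ — still weakly linear — which has more clauses than variables (the classical fact about minimal unsatisfiable formulas); an unsatisfiable $F'$ must then contain a heavily used literal $\ell$, and by weak linearity the $\occ(\ell)$ clauses through $\ell$ pairwise meet only in $\ell$, so they already carry $\Omega\!\left(k\,\occ(\ell)\right)$ distinct variables, hence — via minimality — many clauses. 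Playing the local occurrence bound off against this variable count, and iterating/optimizing, should produce $\frac{4^k}{8e^2(k-1)^2}$; I expect squeezing out exactly this constant (the $(k-1)$ rather than $k$, the factor $8e^2$) to be the only real work on this side.

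\noindent\textbf{The construction.} For the second claim I would not aim for an explicit formula. The recursion behind Theorem~\ref{theorem-existence} already shows that the naive idea — replace each occurrence of a variable by a private copy and glue the copies with equality clauses padded up to width $k$ by fresh variables — produces a tower of exponentials, because the padding itself has to be padded. Instead I would start from a \emph{small} non-linear unsatisfiable $k$-CNF, e.g.\ the complete formula $\mathcal{K}_k$ with its $2^k$ clauses (or an unsatisfiable $k$-CNF in which every variable occurs $O(2^k/k)$ times), and glue the private copies of each variable with consistency gadgets that \emph{share} their padding variables in a randomized fashion instead of recursing. The probabilistic method then enters to show that a random interleaving of these gadgets simultaneously (i) keeps the whole formula linear, and (ii) still enforces every consistency constraint, so that unsatisfiability is preserved, at a cost of only $\poly(k)$ clauses per original clause — giving $4k^24^k=(k\cdot 2^{k+1})^2$ in total.

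\noindent\textbf{Where the difficulty will be.} The hard part is the construction. One must design the consistency gadget so that it (a) genuinely forces each bundle of private copies to agree, (b) stays linear once it is overlaid with the gadgets of all the other variables, and (c) does not inflate the size; any two of these are routine, and all three together is exactly the difficulty of the problem — which is why, as the abstract stresses, no explicit construction anywhere near $4^k$ is known. On the satisfiability side the only risk I foresee is that the clean chain ``local lemma $+$ minimal subformula $+$ literal occurrences'' does not by itself reach $4^k/\poly(k)$ and really does need the extra iteration to square the $2^k$.
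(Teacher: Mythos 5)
Your satisfiability argument has a genuine gap: the chain ``local lemma $+$ minimal unsatisfiable subformula $+$ one heavy literal'' tops out at roughly $2^k$, not $4^k/\poly(k)$. A minimal unsatisfiable subformula has more clauses than variables, and the clauses through a single literal of occurrence about $2^k/(ek)$ contribute only $\Theta(k\cdot 2^k/k)=\Theta(2^k)$ distinct variables, so this route gives $|F|=\Omega(2^k)$; the ``iterating/optimizing'' that is supposed to square this is exactly the step you leave blank, and it is where the whole proof lives. The paper does not use minimality at all. Its two ingredients are: (i) Lemma~\ref{lemma-frequent} --- delete from every clause a literal of maximum degree to get a $(k-1)$-CNF formula; if some literal had occurrence above $\focc(k-1)$ there, weak linearity forces the deleted partner literals to be pairwise distinct and all of maximum degree, so an unsatisfiable weakly linear formula must contain \emph{many} (at least $\focc(k-1)+2$) literals of degree at least $\focc(k-1)+1$, not just one; and (ii) Lemma~\ref{lemma-rich} --- a linear hypergraph with $d$ vertices of degree $d$ has at least $\binom{d+1}{2}$ edges (delete a heavy vertex; linearity makes other degrees drop by at most one, then induct). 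Viewing the formula as a linear hypergraph on literals and using $\focc(k-1)\geq 2^{k-1}/(e(k-1))$ yields the quadratic count $\binom{\focc(k-1)+2}{2} > \frac{4^k}{8e^2(k-1)^2}$. Without a substitute for (i) and (ii), your sketch cannot reach the stated bound.

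The construction half is also missing the key idea. Your plan --- private copies of variables glued by consistency gadgets whose padding variables are shared ``in a randomized fashion'' --- is the very approach that produces the tower in Theorem~\ref{theorem-existence}: a padded clause only enforces its consistency implication when its padding literals are falsified, which must itself be enforced by further clauses, and that is the recursion; sharing padding variables at random simply destroys the implications, so unsatisfiability is not preserved, and you concede in (a)--(c) that you have no gadget meeting all three requirements. The paper avoids gadgets entirely: Lemma~\ref{lemma-vdm} (Kuzjurin's Vandermonde construction over $GF(q)$) gives a linear $k$-uniform hypergraph with $n=kq$ vertices and $m=q^2$ edges; choosing a prime power $q\in\{k2^k,\dots,2k2^k-1\}$ makes $m/n=q/k\geq 2^k$, and then Lemma~\ref{random-formula} assigns random signs to the literals, so the expected number of satisfying assignments is $2^n(1-2^{-k})^m<1$ and some signing is unsatisfiable, with $q^2\leq 4k^24^k$ clauses. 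The probabilistic step is a one-line first-moment argument over assignments; the substantive object is the dense linear hypergraph, which your proposal never constructs.
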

It is a common phenomenon in extremal combinatorics that by
probabilistic means one can show that a certain object exists (in our
case, a ``small'' linear unsatisfiable $k$-CNF formula), but one
cannot explicitly construct it. We have no explicit construction
avoiding the tower-like growth in Theorem~\ref{theorem-existence}. We
give some arguments why this is so, and show that small linear
unsatisfiable $k$-CNF formulas have a more complex structure than
their non-linear relatives. To do so, we speak about {\em resolution}.

\subsection{Resolution Trees} 

If $C$ and $D$ are clauses and there is unique literal $u$ such that
$u\in C$ and $\bar{u} \in D$, then $(C \setminus \{u\}) \cup (D
\setminus \{\bar{u}\})$ is called the {\em resolvent} of $C$ and $D$.
It is easy to check that every assignment satisfying $C$ and $D$ also
satisfies the resolvent.

\begin{definition}
  A {\em resolution tree} for a CNF formula $F$ is a tree $T$ whose
  vertices are labeled with clauses, such that
  \begin{itemize}
  \item each leaf of $T$ is labeled with a clause of $F$,
  \item the root of $T$ is labeled with the empty clause,
  \item if vertex $a$ has children $b$ and $c$, and these are
   labeled with clauses $C_a, C_b, C_c$, respectively, then $C_a$
    is the resolvent of $C_b$ and $C_c$.
  \end{itemize}
\end{definition}

It is well-known that a CNF formula $F$ is unsatisfiable if and only
if it has a resolution tree (which can be exponentially
large in $|F|$). Proving lower bounds on the size of
resolution trees (and general resolution proofs, which we will not
introduce here) has been and still is an area of intensive research.
See for example Ben-Sasson and Wigderson~\cite{BW01}.

\begin{theorem}
  Let $k \geq 2$. Every resolution tree of an unsatisfiable weakly
  linear $k$-CNF formula has at least $2^{2^{\frac{k}{2}-1}}$ leaves.
\label{theorem-treelike}
\end{theorem}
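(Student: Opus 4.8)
The plan is to go through the well-known Prover--Delayer game for tree-like resolution (Pudl\'ak and Impagliazzo): if the Delayer has a strategy that scores at least $p$ points against every Prover, then every tree-like resolution refutation has at least $2^{p}$ leaves. So it suffices to show that for an arbitrary unsatisfiable weakly linear $k$-CNF formula $F$ the Delayer can always score at least $2^{k/2-1}$ points. For small $k$ the claimed bound is a constant and can be checked by hand (a $k$-CNF with $k\ge 2$ cannot be refuted by a tree with one or two leaves, since the two clauses resolving to the empty clause would have to be unit clauses), so assume $k$ is large.

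The Delayer strategy is \emph{preemptive satisfaction}. Keeping the current partial assignment $\rho$, for every clause $C$ not yet satisfied let $r(C)$ be the number of its literals not yet falsified by $\rho$, and call $C$ \emph{hot} if $r(C)\le\lceil k/2\rceil$. When the Prover queries a variable $x$: if no hot clause contains $x$, the Delayer defers, scoring a point; otherwise the Delayer \emph{responds}, setting $x$ so as to satisfy the alive clause through $x$ whose value of $r(\cdot)$ is smallest. Two easy facts set up the analysis. First, since the threshold is at least $1$, no clause becomes falsified while the Delayer defers, and a response falsifies a clause only when the responded-to clause $C^{*}$ and some other clause $C'$ both have $r=1$ and are complementary on $x$; thus the game can only end in this ``doubly critical'' configuration. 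Second, by weak linearity $|C^{*}\cap C'|\le 1$, so $\rho$ must already have falsified the $k-1$ literals of $C^{*}$ and the $k-1$ literals of $C'$ lying outside $x$, and these live on at least $2k-3$ distinct variables. This bounds only $|\rho|$, not the score, so the real content is a recursion.

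The heart of the argument is that each further step of pushing a clause \emph{inside} the hot window costs the Prover a freshly prepared helper clause. Suppose the Delayer is forced to respond to a hot clause $C$ with $r(C)=j+1\le\lceil k/2\rceil$ and the response falsifies a literal of another clause $C'$, bringing $C$ down to $r=j$; then $C'$ was already at $r\le j+1$, and $|C\cap C'|\le 1$ makes the history that brought $C$ to $r\le j+1$ essentially independent of the history that brought $C'$ there. Formalizing this, let $D(j)$ be the least number of points the Delayer has scored by the first moment some clause reaches $r\le j$. For $j\ge\lceil k/2\rceil$ every Prover move so far was a deferral, so $D(\lceil k/2\rceil)\ge k-\lceil k/2\rceil$; and for $j<\lceil k/2\rceil$ the step above gives $D(j)\ge 2D(j+1)-O(1)$, since the two independent preparations each cost at least $D(j+1)$ points and share only $O(1)$ of them. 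Unrolling yields $D(1)\ge 2^{\lceil k/2\rceil-1}\bigl(\lfloor k/2\rfloor-O(1)\bigr)\ge 2^{k/2-1}$ for $k$ large; since the game ends only after some clause reaches $r=1$, the Delayer scores at least $D(1)\ge 2^{k/2-1}$ points, and the refutation has at least $2^{2^{k/2-1}}$ leaves.

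I expect the main obstacle to be making ``independence'' quantitative. A single deferral lowers $r(\cdot)$ for every clause containing the complementary literal, so deferrals are not literally partitioned among clauses; and the ``preparation trees'' attached to two different clauses cannot share a clause (a clause is killed the moment it is used as a helper) but could a priori share literals. The fix is a careful charging scheme that assigns each scored point to a clause one of whose literals it falsifies, together with repeated use of $|C\cap D|\le 1$ to bound how often two separate lines of preparation get charged the same point, so that the factor-$2$ recursion survives with only $O(1)$ loss per level; this is exactly where weak linearity does the work and where all the bookkeeping lives. (A width lower bound together with a size--width tradeoff would also suffice in principle, but the obstruction here appears genuinely tree-like, so the game argument seems the right tool.)
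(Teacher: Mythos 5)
Your overall framework (the Pudl\'ak--Impagliazzo Prover--Delayer game) is a legitimate tool and is genuinely different from the paper's argument, which instead performs a random walk down a minimum-size (hence regular) resolution tree and tracks the potential $w(G)=\sum_{C\in G,\,|C|\le k-2}2^{k-|C|}$, using weak linearity only to bound by $2i$ the number of $(k-1)$-clauses through any variable after $i$ steps, whence $\E[w(F_\ell)]\le 2\ell^2$ and most walks of length $2^{k/2-1}$ miss the leaves. However, your proposal has a genuine gap exactly at its quantitative core: the recursion $D(j)\ge 2D(j+1)-O(1)$ is asserted, not proved, and you yourself relegate the needed ``charging scheme'' to unspecified bookkeeping. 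Two concrete problems. First, $D(j+1)$ is by definition a lower bound on the points scored before the \emph{first} clause ever reaches $r\le j+1$, starting from the empty assignment; it gives no lower bound on the \emph{additional} points needed to bring a second clause to $r\le j+1$ later in the game, when many variables are already set, so the sentence ``the two independent preparations each cost at least $D(j+1)$ points'' does not follow from anything you have defined. Second, the ``share only $O(1)$'' claim is not what weak linearity gives you: a deferral sets one variable and falsifies one literal, but that literal may occur in arbitrarily many clauses (weak linearity bounds $|C\cap D|$ for a \emph{pair} of clauses, not the number of clauses through a literal), so a single scored point can simultaneously advance many clauses in both preparation trees; the bound $|C\cap D|\le 1$ caps the sharing between one fixed pair of clauses at one point, but the two trees contain many clauses and may share many points. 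In addition, responses score no points yet falsify literals in arbitrarily many hot clauses, and these unscored drops feed both preparations, so cascades of responses are precisely what the accounting must control and currently does not.

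Until that recursion is actually established, the argument is an outline rather than a proof, and it is not clear that it survives with only an $O(1)$ loss per level. Minor further issues: your unrolled bound exceeds $2^{k/2-1}$ only for large $k$, while the theorem is claimed for every $k\ge 2$ (the paper's argument needs no such threshold); and in the key step the phrase ``bringing $C$ down to $r=j$'' should read ``bringing $C'$ down,'' since the response satisfies $C$ --- as written the recursion is set up on the wrong clause.
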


A large ratio between the size of $F$ and the size of a smallest
resolution tree is an indication that $F$ has a complex structure.
For example, it is well-known that the running time of so-called
Davis-Putnam procedures on a formula $F$ is lower bounded by the size
of the smallest resolution tree of $F$ (actually those procedures were
introduces by Davis, Logeman and Loveland~\cite{dll}). Such a
procedure tries to find a satisfying assignment for a formula $F$ (or
to prove that none exists) by choosing a variable $x$, and then
recursing on the formulas $F^{[x \mapsto 0]}$ and $F^{[x \mapsto 1]}$,
obtained from $F$ by fixing the value of $x$ to $0$ or $1$,
respectively. If $F$ is unsatisfiable, the procedure implicitly
constructs a resolution tree. \\

 A CNF formula $F$ is {\em minimal
  unsatisfiable} if it is unsatisfiable, and for every clause $C \in
F$, $F\setminus \{C\}$ is satisfiable. The complete $k$-CNF formula
introduced above is minimal unsatisfiable, and has a resolution tree
with $2^k$ leaves, one for every clause.  This is as small as
possible, since for a minimal unsatisfiable formula, every clause must
appear as label of at least one leaf of any resolution tree.  We call
a resolution tree {\em strict} if no two leaves are labeled by the
same clause, and a formula $F$ {\em strictly treelike} if it has a
strict resolution tree. In some sense, strictly treelike formulas are
the least complex formulas possible.  For example, the complete
formula $\mathcal{K}_k$ and the formulas constructed in the proof of
Theorem~\ref{theorem-existence} are strictly treelike.

\begin{theorem}
  For any $\epsilon > 0$, there exists a constant $c$ such that for
  any $k \in \mathbb{N}$, any strictly treelike weakly linear $k$-CNF
  formula has at least ${\rm tower}_{2-\epsilon} (k - c)$ clauses,
  where ${\rm tower}_a(n)$ is defined by ${\rm tower}_a(0) = 1$ and
  ${\rm tower}_a(n+1) = a^{ {\rm tower}_a(n)}$.
\label{theorem-strictly-treelike}
\end{theorem}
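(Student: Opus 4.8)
The plan is to prove Theorem~\ref{theorem-strictly-treelike} by induction on $k$, the inductive hypothesis supplying the tower bound for all smaller clause widths. We first make two harmless reductions. If $F$ is strictly treelike with strict resolution tree $T$, we may delete from $F$ every clause not occurring as a leaf label of $T$; this only decreases $|F|$ and keeps $F$ strictly treelike and weakly linear, so that $|F|$ now equals the number of leaves of $T$. Hence it suffices to lower bound the number of leaves of a strict resolution tree of a weakly linear $k$-CNF. It is convenient to pass to the standard ``decision tree'' reformulation: such a tree with $\ell$ leaves corresponds to a binary tree whose internal nodes query variables and whose $\ell$ leaves carry \emph{pairwise distinct} clauses of $F$, each falsified by the partial assignment read along its root-to-leaf path; moreover we may assume the tree is \emph{reduced}, i.e.\ no clause of $F$ is falsified before a leaf is reached. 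Since every clause has width $k$, every root-to-leaf path then has length at least $k$, so $|F|\ge 2^{k}$ and, the tree being binary, its depth is at least $\log_{2}|F|$.

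The heart of the argument is an extraction lemma of the following shape: from the strict resolution tree of a weakly linear $k$-CNF $F$ one can produce a strictly treelike weakly linear $(k-1)$-CNF $G$ that is \emph{exponentially} smaller, $|G|\le \log_{2}|F| + \poly(k)$ --- and here $\poly(k)$ is truly a lower-order term, since $|F|\ge 2^{2^{k/2-1}}$ by Theorem~\ref{theorem-treelike}. Granting the lemma, the induction closes:
\[
   |F|\ \ge\ 2^{\,|G|-\poly(k)}\ \ge\ 2^{\,{\rm tower}_{2-\epsilon}(k-1-c)-\poly(k)}\ \ge\ (2-\epsilon)^{\,{\rm tower}_{2-\epsilon}(k-1-c)}\ =\ {\rm tower}_{2-\epsilon}(k-c),
\]
where the middle inequality is the inductive hypothesis applied to $G$, and the last inequality uses that $\bigl(1-\log_{2}(2-\epsilon)\bigr)\,{\rm tower}_{2-\epsilon}(k-1-c)\ge\poly(k)$ once $k$ is past a threshold depending only on $\epsilon$; absorbing that threshold into $c$ leaves the bound vacuous for smaller $k$. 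This is also exactly why the base is $2-\epsilon$ and not $2$: the positive slack $1-\log_{2}(2-\epsilon)$ is what swallows the lower-order losses of the lemma.

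It remains to sketch the extraction of $G$, which is where the real work lies. Three demands must be met simultaneously: $G$ should be a \emph{clean} $(k-1)$-CNF, every clause of width exactly $k-1$; $G$ should be \emph{small}, with only logarithmically many clauses; and $G$ should itself be \emph{strictly treelike}. Weak linearity controls the first and third demands: if a restriction collapsed two distinct clauses of $F$ onto the same $(k-1)$-clause they would have shared $k-1\ge 2$ literals, contradicting weak linearity, and the same observation keeps widths from dropping below $k-1$ under a suitably chosen restriction and keeps the associated resolution subtree strict. The logarithmic \emph{size} of $G$ is meant to come from the depth bound above: $G$ should be built from the clauses met along a single deepest root-to-leaf branch of the decision tree, together with the subtrees hanging off that branch, contributing only $O(\text{depth})=O(\log|F|)$ clauses. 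The delicate point --- and, I expect, the main obstacle --- is to choose that branch and the accompanying restriction so that \emph{every} surviving clause ends up with width exactly $k-1$ and the resulting object is honestly a strict resolution tree of an unsatisfiable $(k-1)$-CNF. A bare single-variable restriction is far too weak: it gives at best $|F|\ge{\rm tower}_{2-\epsilon}(k-1-c)$, one entire tower-level short. Boosting this constant-factor behaviour to the genuinely exponential gain $|F|\ge 2^{{\rm tower}_{2-\epsilon}(k-1-c)}$ that the theorem demands is precisely where strictness and weak linearity must be played off against each other, and will be the crux of the proof.
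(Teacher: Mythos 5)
Your argument is a reduction plan, not a proof: everything hinges on the extraction lemma asserting that from a strict resolution tree of a weakly linear $k$-CNF $F$ one can produce a strictly treelike weakly linear $(k-1)$-CNF $G$ with $|G|\leq \log_2|F|+\poly(k)$, and you neither prove it nor make it plausible in the quantitative form you need --- indeed you explicitly flag it as ``the main obstacle'' and ``the crux of the proof.'' Concretely, two steps fail as sketched. First, the size bound: the clauses met along a root-to-leaf branch together with the subtrees hanging off it number $O(\mathrm{depth})$, but the depth of the tree is only bounded \emph{below} by $\log_2|F|$; it can be as large as $|F|$, and choosing the ``deepest'' branch makes this worst, so $O(\mathrm{depth})=O(\log|F|)$ is simply false. (Only a shallowest leaf is guaranteed to lie at depth at most $\log_2(\#\mathrm{leaves})$.) Second, even granting a short branch, the nodes hanging off it are labeled by \emph{resolvents}, of varying widths, not by clauses of a uniform $(k-1)$-CNF; you give no construction showing that what you collect is unsatisfiable, weakly linear, of width exactly $k-1$, and again strictly treelike --- your own remarks (``a bare single-variable restriction is far too weak'') concede that the step which would turn a constant-factor loss into the needed exponential gain is missing. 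Since the entire induction on $k$ rests on this unproved lemma, the proposal has a genuine gap.

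For comparison, the paper does not induct on $k$ over different formulas at all. It works inside a single strict tree $T$: for each node $a$ it builds a graph $G_a$ on the literals of $C_a$, joining two literals if they lie in a common leaf clause of the subtree of $a$; strictness plus weak linearity make these witnessing leaves, hence the resulting cliques, edge-disjoint. The complexity of a node is the vector of clique-cover numbers $\kappa_i(G_a)$, which is $1$-Lipschitz along tree paths (Proposition~\ref{lipschitz}), and Lemma~\ref{lemma-extend} inflates antichains of ``extendable'' nodes level by level in $i$; the tower arises from the recursion $\theta_i\approx 2^{\nu_{i+1}\theta_{i+1}}/\theta_{i+1}$, not from passing to a smaller-width formula. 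If you want to salvage your route, you would have to prove your extraction lemma in full, which looks at least as hard as the paper's argument.
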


Strictly treelike formulas appear in other contexts, too. Consider
MU(1), the class of minimal unsatisfiable formulas whose number of
variables is one less than the number of clauses. A result
of Davydov, Davydova and Kleine B\"uning (\cite{DDKB98}, Theorem 12)
implies that every MU(1)-formula is strictly treelike. Also,
MU(1)-formulas serve as ``universal patterns'' for unsatisfiable
formulas: Szeider~\cite{Szeider2003} shows that a formula $F$ is
unsatisfiable if and only if it can be obtained from a MU(1)-formula
$G$ by renaming the variables of $G$ (in a possibly non-injective
manner). It is not difficult to show that a strictly treelike linear
$k$-CNF formula can be transformed into a linear MU(1)-formula with
the same number of clauses.

\subsection{Related Work}

For a CNF formula $F$ and a variable $x$, let $d_F(x)$ denote the {\em
  degree of $x$}, i.e. the number of clauses of $F$ containing $x$ or
$\bar{x}$, and let $d(F):=\max_x d_F(x)$ denote the {\em maximum
  degree} of $F$. For the complete $k$-CNF formula $\mathcal{K}_k$, we
have $d(\mathcal{K}_k) = 2^{k}$.  Intuitively, in an unsatisfiable
$k$-CNF formula, some variables should occur in many clauses. In other
words, the following function should be large:
\begin{eqnarray}
  f(k)  :=  \max\{ d \ \big| \ \textnormal{every k}\textnormal{-CNF
    formula }F\textnormal{ with } d(F) \leq d \textnormal { 
    is satisfiable} \} \ .
\label{eqn-f(k)}
\end{eqnarray}
The function $f(k)$ has first been investigated by
Tovey~\cite{Tovey84}, who showed $f(k) \geq k$,using Hall's Theorem.
Using the famous Lov\'asz Local Lemma (see~\cite{EL75} for the
original proof, or~\cite{AS00} for several generalized versions),
Kratochv\'{i}l, Savick\'{y} and Tuza~\cite{KST93} proved that $f(k)
\geq \frac{2^k}{ek}$, and that while all $k$-CNF formulas $F$ with
$d(F) \leq f(k)$ are trivially satisfiable, deciding satisfiability of
$k$-CNF formulas $F$ with $d(F) \leq f(k)+1$ is already NP-complete,
for $k \geq 3$. For $k=3$, this is already observed in~\cite{Tovey84}.
For an upper bound, the complete $k$-CNF formula witnesses that $f(k)
\leq 2^k-1$. Savick\'{y} and Sgall~\cite{SS2000} showed $f(k) \in
O(k^{-0.26}2^k)$.  This was improved by Hoory and Szeider~\cite{HS06}
to $f(k) \in O\left(\frac{\ln(k)2^k}{k}\right)$, and recently
Gebauer~\cite{Gebauer09} proved that $f(k) \leq \frac{2^{k+2}}{k}$.
Thus, $f(k)$ is known up to a constant factor.  The best upper bounds
on $f(k)$ come from MU(1)-formulas. This is true for large values of
$k$, since the formulas constructed in~\cite{Gebauer09} are MU(1), as
for small values: Hoory and Szeider~\cite{szeider05} show that the
function $f(k)$, when restricted to MU(1)-formulas, is computable (in
general this is not known), and derive the currently best-known bounds
on $f(k)$ for small $k$ ($k \leq 9$).  To summarize: When we try to
find unsatisfiable $k$-CNF formulas minimizing a certain
parameter, like number of clauses or maximum degree, strictly treelike
formulas do an excellent job. However, if we try to construct a small
unsatisfiable linear $k$-CNF formula, they perform horribly.  Just
compare our upper bound in Theorem~\ref{theorem-bounds} with the lower
bound for strictly
treelike formulas in Theorem~\ref{theorem-strictly-treelike}\\

While interest in linear CNF formulas is rather young, {\em linear
  hypergraphs} have been studied for quite some time. A hypergraph
$H=(V,E)$ is linear if $|e \cap f| \leq 1$ for any two distinct
hyperedges $e,f \in E$. A $k$-uniform hypergraph is a hypergraph where every
hyperedge has cardinality $k$. We ask when a hypergraph $2$-colorable,
i.e., admits a $2$-coloring of its vertices such that no hyperedge
becomes monochromatic. Bounds on the
number of edges in such a hypergraph were given by Erd\H{o}s and
Lov\'asz~\cite{EL75} (interestingly, this is the paper where the Local
Lemma has been proven). They show that there are non-$2$-colorable
linear $k$-uniform hypergraphs with $ck^44^k$ hyperedges, but not with less
than $\frac{c'4^k}{k^3}$.  The proof of the lower bound directly
translates into our lower bound for linear $k$-CNF formulas. For the
number of edges in linear $k$-uniform hypergraphs that are not $2$-colorable,
the currently best upper bound is $ck^24^k$ by Kostochka and
R\"odl~\cite{KR2009}, and the best lower bound is $k^{-\epsilon}4^k$,
for any $\epsilon > 0$ and sufficiently large $k$, due to Kostochka
and Kumbhat~\cite{KK2008}.

\section{Existence and Upper and Lower Bounds}
\label{section-upper-lower}

\begin{proof}[Proof of Theorem~\ref{theorem-existence}]
  Choose $F_0$ to be the formula consisting of only the empty clause.
  Suppose we have constructed $F_k$, and want to construct $F_{k+1}$.
  Let $m = |F_k|$. We create $m$ new variables $x_1, \dots, x_{m}$,
  and let $\mathcal{K}_m = \{D_1, D_2, \dots, D_{2^{m}}\}$ be the
  complete $m$-CNF formula over $x_1,\dots, x_{m}$.  It is
  unsatisfiable, but not linear. We take $2^{m}$ variable disjoint
  copies of $F_k$, denoted by $F^{(1)}_k,
  F^{(2)}_k,\dots,F^{(2^{m})}_k$. For each $1 \leq i \leq 2^{m}$, we
  build a linear $(k+1)$-CNF formula $\tilde{F}^{(i)}_k$ from
  $F^{(i)}_k$ by adding, for each $1 \leq j \leq m$, the
  $j$\textsuperscript{th} literal of $D_i$ to the
  $j$\textsuperscript{th} clause of $F^{(i)}_k$. Note that every
  assignment satisfying $\tilde{F}^{(i)}_k$ also satisfies $D_i$.
  Finally, we set $F_{k+1} := \bigcup_{i=1}^{2^{m}}
  \tilde{F}^{(i)}_k$.  This is an unsatisfiable linear $(k+1)$-CNF
  formula with $m2^{m}$ clauses.
\end{proof}

Using induction, it is not difficult to see that the formulas $F_k$
are strictly treelike. We will prove the upper bound in
Theorem~\ref{theorem-bounds} by giving a probabilistic construction of
a comparably small unsatisfiable linear $k$-CNF formula. Our
construction consists of two steps. First, we construct a linear
$k$-uniform hypergraph $H$ that is ``dense'' in the sense that
$\frac{m}{n}$ is large, where $m$ and $n$ are the number of hyperedges
and vertices, respectively, and then transform it randomly into a
linear $k$-CNF formula $F$ that is unsatisfiable with high
probability.

\begin{lemma}
  If there is a linear $k$-uniform hypergraph $H$ with $n$ vertices
  and $m$ edges such that $\frac{m}{n} \geq 2^k$, then there is
  an unsatisfiable linear $k$-CNF formula with $m$ clauses.
\label{random-formula}
\end{lemma}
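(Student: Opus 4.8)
The plan is to build $F$ directly from $H$ by a random signing. Identify the vertex set $V$ of $H$ with a set of $n$ Boolean variables. For each hyperedge $e = \{v_1,\dots,v_k\} \in E(H)$, create a clause $C_e$ on exactly the variables $v_1,\dots,v_k$, where independently for each $i$ we include the literal $v_i$ with probability $1/2$ and the literal $\bar v_i$ with probability $1/2$, all these choices (over all edges and all incident vertices) mutually independent. Since $H$ is $k$-uniform, each $C_e$ is a genuine $k$-clause: its $k$ variables are distinct, so there is no repeated or complementary literal. Since $H$ is linear, any two hyperedges, hence any two clauses $C_e, C_{e'}$, share at most one variable. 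Thus $F := \{C_e : e \in E(H)\}$ is a linear $k$-CNF formula with $m$ clauses, no matter how the coins land; it only remains to show that with positive probability $F$ is unsatisfiable.

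Fix an assignment $\beta : V \to \{0,1\}$. The clause $C_e$ is falsified by $\beta$ exactly when every one of its literals evaluates to false, and for each vertex $v \in e$ the corresponding literal is false precisely when the coin for the pair $(e,v)$ chose the polarity opposite to $\beta(v)$ — an event of probability $1/2$, irrespective of the value $\beta(v)$. Since the $k$ coins belonging to $e$ are independent, $\Pr[\,\beta \text{ falsifies } C_e\,] = 2^{-k}$; and since the coin sets of distinct hyperedges are disjoint, the events ``$\beta$ falsifies $C_e$'' are independent across $e$. Therefore $\Pr[\,\beta \text{ satisfies } F\,] = (1 - 2^{-k})^{m} \leq e^{-m 2^{-k}}$.

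Taking a union bound over all $2^{n}$ assignments gives $\Pr[\,F \text{ satisfiable}\,] \leq 2^{n} e^{-m/2^{k}}$, and the hypothesis $m/n \geq 2^{k}$ yields $m/2^{k} \geq n$, so this is at most $2^{n} e^{-n} = (2/e)^{n} < 1$. Hence some outcome of the coin flips produces an unsatisfiable linear $k$-CNF formula with $m$ clauses, which is the claim. There is no real obstacle here; the one point deserving care is the observation that makes the estimate uniform in $\beta$ — each literal of $C_e$ is false under $\beta$ with probability exactly $1/2$ regardless of $\beta$, and the falsification events for different edges depend on disjoint blocks of coins and are therefore independent. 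Given that, the union bound is immediate, and pleasantly the slack in the assumption $m/n \geq 2^{k}$ is generous enough that the Lov\'asz Local Lemma is not needed.
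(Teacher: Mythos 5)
Your proof is correct and is essentially the same argument as the paper's: the same random signing of the hyperedges, followed by a first-moment/union-bound computation showing the expected number of satisfying assignments is below one when $m/n \geq 2^k$. The only difference is cosmetic (you spell out the independence structure and use $2^n e^{-n} < 1$ rather than the paper's $e^{\ln(2)n - 2^{-k}m} \leq 1$), so there is nothing to change.
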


\begin{proof}
  Let $H=(V,E)$. By viewing $V$ as a set of variables and $E$ as a set
  of clauses (each containing only positive literals), this is a
  (satisfiable) linear $k$-CNF formula.  We replace each literal in
  each clause by its complement with probability $\frac{1}{2}$,
  independently in each clause. Let $F$ denote the resulting (random)
  formula. For any fixed truth assignment $\alpha$, it holds that
  $\Pr[\alpha \textnormal{ satisfies } F] = (1-2^{-k})^m$. Hence the
  expected number of satisfying assignments of $F$ is
  $$
  2^n (1-2^{-k})^m < 2^n e^{-2^{-k}m} = e^{\ln(2)n - 2^{-k}m} \leq 1 \ ,
  $$
  where the last inequality follows from $\frac{m}{n} \geq 2^k$.
  Hence some formula $F$ has fewer than one satisfying assignment,
  i.e., none.
\end{proof}

How can we construct a dense linear hypergraph? We use a construction
by Kuzjurin~\cite{Kuzjurin1995}. Our application of this construction
is motivated by Kostochka and R\"odl~\cite{KR2009}, who use it to
construct linear hypergraphs of large chromatic number.

\begin{lemma}
  For any prime power $q$ and any $k \in \mathbb{N}$, there exists
  a $k$-uniform linear hypergraph with $kq$ vertices and $q^2$ edges.
\label{lemma-vdm}
\end{lemma}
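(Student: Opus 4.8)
The plan is to build the hypergraph directly from the affine plane over $\mathbb{F}_q$, restricting attention to lines of ``bounded slope'' so that every line meets a fixed collection of $k$ parallel classes in exactly one point each. Concretely, I would take the vertex set to be $V = \{0,1,\dots,k-1\} \times \mathbb{F}_q$, so that $|V| = kq$, and think of the $i$\textsuperscript{th} \emph{layer} $V_i = \{i\} \times \mathbb{F}_q$ as a copy of $\mathbb{F}_q$. For each pair $(a,b) \in \mathbb{F}_q^2$ I would define a hyperedge
\[
  e_{a,b} \ := \ \bigl\{ (i, a i + b) \ : \ 0 \leq i \leq k-1 \bigr\} \ ,
\]
i.e. the graph of the affine function $i \mapsto a i + b$ sampled at the first $k$ field elements (here I identify $0,1,\dots,k-1$ with distinct elements of $\mathbb{F}_q$, which is fine as long as $q \geq k$; the case $q < k$ needs a separate remark, see below). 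This gives exactly $q^2$ hyperedges, one per pair $(a,b)$, and each has exactly $k$ vertices, one in each layer, so the hypergraph is $k$-uniform.

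The key step is checking linearity. Suppose $(a,b) \neq (a',b')$ and the hyperedges $e_{a,b}$ and $e_{a',b'}$ share two distinct vertices. Since each hyperedge has one vertex per layer, two shared vertices must lie in two different layers, say layers $i \neq j$. Sharing the layer-$i$ vertex means $a i + b = a' i + b'$, and sharing the layer-$j$ vertex means $a j + b = a' j + b'$. Subtracting, $(a - a')(i - j) = 0$ in $\mathbb{F}_q$; since $i \neq j$ and we chose $i,j$ to be distinct field elements, $i - j \neq 0$, hence $a = a'$, and then $b = b'$, contradicting $(a,b) \neq (a',b')$. So any two distinct hyperedges meet in at most one vertex, which is exactly linearity.

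The only real obstacle is the hypothesis $q \geq k$, needed to pick $k$ distinct field elements $0,1,\dots,k-1$; for a general prime power $q$ this may fail. I would handle this by observing that the statement is only useful (e.g. in the proof of Theorem~\ref{theorem-bounds}) when $q$ is large compared to $k$, so \obda $q \geq k$; alternatively, one can index the layers by any $k$ distinct elements of a large enough extension, or simply state the lemma for $q \geq k$, since that is all the later application needs. (In fact the construction is essentially optimal: a linear $k$-uniform hypergraph on $n$ vertices has at most $\binom{n}{2}/\binom{k}{2}$ edges by double-counting vertex pairs, so $\Theta(n^2/k^2)$ edges is the right order, and with $n = kq$ this is $\Theta(q^2)$.) This completes the plan; the verification above is already essentially the whole proof, so no lengthy computation remains.
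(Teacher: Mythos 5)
Your proof is correct (modulo the $q \geq k$ caveat, which you handle appropriately) and it reaches the same conclusion by a closely related but noticeably different route. You parametrize the edges \emph{primally}: each edge is the graph of an affine function $i \mapsto ai+b$ over $\mathbb{F}_q$, restricted to $k$ distinct evaluation points, and linearity follows because two distinct affine functions agree at most once. The paper instead describes the edge set \emph{dually}, as the kernel of a $(k-2)\times k$ Vandermonde-type parity-check system: the tuples $(x_1,\dots,x_k)$ satisfying the system form a $2$-dimensional solution space ($q^2$ edges), and fixing two coordinates in distinct layers leaves a $(k-2)\times(k-2)$ Vandermonde system with a unique solution, so two vertices in distinct layers lie in exactly one common edge. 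Both constructions are, in coding-theoretic terms, two sides of the same generalized Reed--Solomon picture (a $2$-dimensional evaluation code versus its parity-check description), and both hinge on ``two points determine the edge.'' What your version buys is transparency and an explicit accounting of the hypothesis that the $k$ evaluation points be distinct in $\mathbb{F}_q$ (so $q \geq k$); the paper's version silently uses the elements $1,2,\dots,k$ of $GF(q)$ and needs them distinct as well, so it carries the same implicit restriction, which is harmless since the application takes $q \geq k2^k$. One small point to make explicit in your write-up: distinct pairs $(a,b)\neq(a',b')$ give distinct hyperedges (otherwise the two affine functions would agree at all $k \geq 2$ evaluation points), which is what guarantees the count of exactly $q^2$ edges; the paper gets this count instead from the dimension of the solution space.
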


With $n=kq$, this hypergraph has $n^2 / k^2$ hyperedges.  This is
almost optimal, since any linear $k$-uniform hypergraph on $n$
vertices has at most ${n \choose 2} / {k \choose 2}$ hyperedges: The
$n$ vertices provide us with ${n \choose 2}$ vertex pairs.  Each
hyperedge occupies ${k \choose 2}$ pairs, and because of linearity, no
pair can be occupied by more than one hyperedge.

\begin{proof}
  Choose the vertex set $V = V_1 \uplus \dots \uplus V_k$, where each
  $V_i$ is a disjoint copy of the finite field $GF(q)$. The hyperedges
  consist of all $k$-tuples $(x_1,\dots,x_k)$ with $x_i \in V_i, 1\leq
  i\leq k$, such that
  \begin{eqnarray}
  \left(
    \begin{array}{cccccc}
      1 & 1 &       &  1   &        & 1 \\
      1 & 2 & \dots &  i   &  \dots & k \\
      1 & 4 &       &  i^2 &        & k^2 \\      
      \vdots  & \vdots & & \vdots & & \vdots \\
      1 & 2^{k-3} & \dots & i^{k-3} & \dots & k^{k-3}\\
    \end{array}
  \right)
  \left(
    \begin{array}{c}
      x_1 \\
      x_2 \\
      \vdots\\
      x_i\\
      \vdots\\
      x_k
    \end{array}
  \right)      
   = \mathbf{0}  \ .
   \label{vdm}
 \end{eqnarray}

 Consider two distinct vertices $x \in V_i$, $y \in V_j$. How many
 hyperedges contain both of them? If $i=j$, none. If $i\ne j$, we can
 find out by plugging the fixed values $x, y$ into (\ref{vdm}). We
 obtain a (possibly non-uniform) $(k-2)\times(k-2)$ linear system with
 a Vandermonde matrix, which has a unique solution. In other words,
 $x$ and $y$ are in exactly one hyperedge, and the hypergraph is
 linear.  By the same argument, there are exactly $q^2$ hyperedges.
\end{proof}

\begin{proof}[Proof of the upper bound in Theorem~\ref{theorem-bounds}]
  Choose a prime power $q \in \{k2^k, \dots, 2k2^k-1\}$.  By
  Lemma~\ref{lemma-vdm}, there is a linear $k$-uniform hypergraph $H$
  with $n=qk$ vertices and $m=q^2$ hyperedges. Since $\frac{m}{n} =
  \frac{q}{k} \geq 2^k$, Lemma~\ref{random-formula} shows that there
  is an unsatisfiable linear $k$-CNF formula with $q^2 \leq 4k^24^k$
  clauses.
\end{proof}
  
Let us prove the lower bound of Theorem~\ref{theorem-bounds}.  For a
literal $u$ and a CNF formula $F$, we write $\occ_F(u) :=| \{C \in F \
| \ u \in C\} |$, the {\em degree} of the literal $u$. Thus $d_F(x)
=\occ_F(x)+\occ_F(\bar{x})$. We write $\occ(F) = \max_u \occ_F(u)$.
In analogy to $f(k)$, we define $\focc(k)$ to be the largest integer
$d$ such that any $k$-CNF formula $F$ with $\occ(F) \leq d$ is
satisfiable.  Clearly $\focc(k) \geq \frac{f(k)}{2}$, and thus
from~\cite{KST93} it follows that $\focc(k) \geq \frac{2^k}{2ek}$.
Actually, an application of the {\em Lopsided} Lov\'asz Local
Lemma~\cite{erdos91,AS00,LS07} yields $\focc(k) \geq \frac{2^k}{ek}-1$.

\begin{lemma}
  Let $F$ be a linear $k$-CNF formula with at most $1+ \focc(k-1)$
  variables of degree at least $1+ \focc(k-1)$. Then $F$ is
  satisfiable.
  \label{lemma-frequent}
\end{lemma}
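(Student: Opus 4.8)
The goal is to show that if a linear $k$-CNF formula $F$ has only "few" high-degree variables, it is satisfiable. The natural strategy is to reduce to the case $k-1$ by eliminating the small set of high-degree variables, and then invoke the definition of $\focc(k-1)$ on what remains. Let me set $B := \{x_1, \dots, x_t\}$ be the set of variables of degree at least $1 + \focc(k-1)$; by hypothesis $t \leq 1 + \focc(k-1)$. First I would try to find a partial assignment $\beta$ to the variables in $B$ that satisfies every clause of $F$ which contains at least one variable from $B$. If I can do this, then after applying $\beta$, the clauses of $F$ that survive are exactly those disjoint from $B$; each such clause still has $k$ literals, but more to the point, I want to shrink them: the plan is to simply \emph{delete} one arbitrary literal from each surviving clause to obtain a $(k-1)$-CNF formula $F'$ on the remaining variables. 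Linearity is preserved under taking subclauses, so $F'$ is linear, hence in particular a $k-1$-CNF formula — but the key point is its literal-degree.

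The heart of the argument is the degree bound on $F'$. Every variable appearing in $F'$ is outside $B$, hence has degree at most $\focc(k-1)$ in $F$, hence also $\occ_F(u) \le \focc(k-1)$ for every literal $u$ on such a variable; deleting literals only decreases degrees, so $\occ(F') \le \focc(k-1)$. By the definition of $\focc(k-1)$, $F'$ is satisfiable, say by $\gamma$. Then $\beta \cup \gamma$ (extended arbitrarily on any leftover variables) satisfies $F$: clauses meeting $B$ are handled by $\beta$, and clauses disjoint from $B$ are satisfied by $\gamma$ because $\gamma$ already satisfies the shorter clause $C' \subseteq C$ in $F'$.

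So everything rests on producing the partial assignment $\beta$ on $B$ that satisfies all clauses touching $B$. This is where I would use linearity crucially. Consider the "link" of $B$: the set of clauses $C \in F$ with $\var(C) \cap B \neq \emptyset$. I claim each such clause can be satisfied just by assigning the variables in $B$. The danger is a clause $C$ that intersects $B$ in exactly one variable $x_i$, with the wrong polarity forced on us — so I need the assignment on $B$ to be chosen so that for each $i$ and each literal $u$ on $x_i$, we don't have too many clauses "asking" for $\bar u$. Here is the plan: build an auxiliary $t$-CNF-like instance where, for each clause $C$ touching $B$, we record the disjunction of the $B$-literals appearing in $C$; call this reduced clause $C|_B$, a clause of size between $1$ and $t$ over the variables $x_1, \dots, x_t$. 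A satisfying assignment of $\{C|_B : C \in F, \var(C)\cap B \ne \emptyset\}$ is exactly the $\beta$ I want. Now by linearity, two clauses of $F$ sharing the same variable $x_i \in B$ share \emph{only} $x_i$; so among all clauses $C$ with $C|_B = \{\bar x_i\}$ (a unit clause forcing $x_i = 1$) — there can be at most $\occ_F(\bar x_i) \le d_F(x_i)$ of them, and crucially they impose no other constraint. The obstruction to satisfiability of the reduced instance would be two unit clauses $\{x_i\}$ and $\{\bar x_i\}$, or more generally a sub-instance with no satisfying assignment; using $t \le 1 + \focc(k-1)$ together with the literal-degree bound $\occ(F) \le$ (something like) $\focc(k-1)$ on the relevant literals...

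Actually the cleanest route — and the one I expect the paper takes — is to apply $\focc$ recursively \emph{again} here, or to observe directly: the reduced instance on $x_1,\dots,x_t$ has at most $t \le 1+\focc(k-1)$ variables, and I would argue it has a satisfying assignment because it cannot contain two complementary unit clauses and, by linearity, higher-degree variables don't create conflicts among longer reduced clauses. \textbf{The main obstacle} is exactly verifying that $\beta$ exists: one must rule out that the reduced instance $\{C|_B\}$ is unsatisfiable. I would handle this by noting that if it were unsatisfiable it would contain a minimal unsatisfiable sub-instance $G$ on some $s \le t$ variables with $|G| \ge s+1 \ge 2$ clauses, hence some variable $x_i$ with $d_G(x_i) \ge 2$; but if $x_i$ occurs in two clauses of $G$ with the same $B$-polarity they are redundant (one is implied), and if with opposite polarities then — because in $F$ these two clauses share only $x_i$, so their $B$-restrictions share only $x_i$ — the variable $x_i$ is "pure enough" to be set satisfying both, contradicting minimality. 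Iterating this removes all variables from $G$, contradiction; so $\beta$ exists. Carefully pinning down this last combinatorial step (essentially: a linear unit-conflict-free CNF on few variables is satisfiable) is the one part that needs genuine care rather than bookkeeping.
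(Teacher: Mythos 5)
There is a genuine gap, and it sits exactly where you flagged it: the existence of the partial assignment $\beta$ on the set $B$ of high-degree variables that satisfies \emph{every} clause meeting $B$. This claim is false in general. Nothing in the hypothesis or in linearity prevents a high-degree variable $x\in B$ from occurring positively in one clause and negatively in another, with neither clause containing any other $B$-variable; indeed $d_F(x)=\occ_F(x)+\occ_F(\bar x)$, so a high-degree variable typically does occur with both polarities. Take $C_1=(x\vee y_1\vee\dots\vee y_{k-1})$ and $C_2=(\bar x\vee z_1\vee\dots\vee z_{k-1})$ with all other variables distinct and outside $B$: the pair is perfectly linear, yet the reduced instance $\{C|_B\}$ contains the complementary unit clauses $\{x\}$ and $\{\bar x\}$, so no assignment to $B$ alone satisfies both. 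Your proposed repair via a minimal unsatisfiable sub-instance does not help: when $x$ occurs with opposite polarities in two reduced clauses, the assertion that $x$ is ``pure enough to be set satisfying both'' is simply not true --- that is precisely the unsatisfiable configuration. Since clauses meeting $B$ must in general be satisfied partly by variables outside $B$, your residual formula $F'$ (built only from clauses disjoint from $B$) does not certify satisfiability of $F$, and the whole decomposition collapses.

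The paper's proof avoids assigning the high-degree variables at all. It removes from \emph{every} clause of $F$ a literal of maximum degree, obtaining a $(k-1)$-CNF $F'$; since each clause of $F'$ is a subclause of the corresponding clause of $F$, satisfiability of $F'$ gives satisfiability of $F$. The counting step is then: if some literal $u$ occurred in $t\geq 1+\focc(k-1)$ clauses of $F'$, the literals $v_1,\dots,v_t$ removed from those clauses all have degree at least $\occ_F(u)\geq 1+\focc(k-1)$ and are pairwise distinct by (weak) linearity ($v_i=v_j$ would force the two parent clauses to share $u$ and $v_i$), so together with $u$ one gets $t+1\geq 2+\focc(k-1)$ variables of degree at least $1+\focc(k-1)$, contradicting the hypothesis; hence $\occ(F')\leq\focc(k-1)$ and $F'$ is satisfiable by definition of $\focc(k-1)$. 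If you want to salvage your outline, you would have to handle the clauses meeting $B$ without insisting they be killed by an assignment to $B$ --- which essentially leads you back to a removal-plus-counting argument of this kind.
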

\begin{proof}
  Transform $F$ into a $(k-1)$-CNF formula $F'$ by removing in every
  clause in $F$ a literal of maximum degree. We claim that
  $\deg_{F'}(u) \le \focc(k-1)$ for every literal $u$. Therefore $F'$
  is satisfiable, and $F$ is, as well.

  For the sake of contradiction, suppose there is a literal $u$ such
  that $t:= \occ_{F'}(u) \ge 1+\focc(k-1)$. Let $C'_i$,
  $i=1,2,\ldots,t$, be the clauses in $F'$ containing $u$. $C'_i$ is
  obtained by removing some literal $v_i$ from some clause $C_i \in
  F$. By construction of $F'$, $\occ_F(v_i) \ge \occ_F(u) \ge
  \focc(k-1)+1$ for all $1\leq i \leq t$. The $v_i$ are pairwise
  distinct: If $v_i = v_j$, then $\{u,v_i\} \subseteq C_i \cap C_j$.
  Since $F$ is weakly linear, this can only mean $i=j$. Now
  $u,v_1,v_2,\ldots,v_t$ are $t+1 \ge 2+\focc(k-1)$ variables of
  degree at least $1+\focc(k-1)$ in F, a contradiction.
\end{proof}

We see that an unsatisfiable weakly linear $k$-CNF formula has at
least $\focc(k-1)+2 \geq \frac{2^k}{2e(k-1)}+1$ literals of degree at
least $\focc(k-1)+1 \geq \frac{2^k}{2e(k-1)}$.  Double counting yields
$k|F| = \sum_u \occ_F(u) > \frac{4^k}{4e^2(k-1)^2}$, thus $|F| >
\frac{4^k}{16e^2 k^3}$.  By a more careful argument, we can improve
this by a factor of $k$.  We call a hypergraph {\em $(j,d)$-rich} if
at least $j$ vertices have degree at least $d$. The following lemma is
due to Welzl~\cite{emo-personal}.

\begin{lemma}
  For $d \in \NNnull$, every linear $(d,d)$-rich hypergraph has at
  least $\binom{d+1}{2}$ edges. This bound is tight for all $d \in
  \NNnull$.
\label{lemma-rich}
\end{lemma}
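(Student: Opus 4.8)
The plan is to pick $d$ vertices $v_1,\dots,v_d$ each of degree at least $d$ (they exist because the hypergraph is $(d,d)$-rich) and then to count the edges meeting $\{v_1,\dots,v_d\}$. The naive attempt — summing the degrees $d_H(v_i)\ge d$ — overcounts, since an edge through several of the $v_i$ is counted several times. To fix this I would charge each edge $e$ with $e\cap\{v_1,\dots,v_d\}\ne\emptyset$ to the vertex $v_{i(e)}$ of smallest index contained in $e$; distinct edges may be charged to the same vertex, but each edge is charged to exactly one, so the number of edges is at least $\sum_{i=1}^d N_i$, where $N_i$ denotes the number of edges charged to $v_i$.

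The key step, where linearity enters, is a lower bound on $N_i$. Fix $i$. The vertex $v_i$ lies in at least $d$ edges, and an edge is charged to $v_i$ precisely when it contains $v_i$ but none of $v_1,\dots,v_{i-1}$. For each fixed $j<i$, linearity says at most one edge contains both $v_i$ and $v_j$, so at most $i-1$ of the (at least $d$) edges through $v_i$ meet $\{v_1,\dots,v_{i-1}\}$. Hence $N_i\ge d-(i-1)$, which is positive since $i\le d$. Summing, the number of edges is at least $\sum_{i=1}^d (d-i+1)=\sum_{\ell=1}^d \ell=\binom{d+1}{2}$.

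For tightness I would take the complete graph $K_{d+1}$ viewed as a $2$-uniform hypergraph: it is linear, it has $d+1\ge d$ vertices of degree exactly $d$ (so it is $(d,d)$-rich), and it has exactly $\binom{d+1}{2}$ edges; the degenerate cases $d=0$ and $d=1$ are the empty hypergraph and a single edge. I do not expect a real obstacle here — the content is just the ``smallest index'' charging combined with linearity — but one should check that the charging is well defined, that no edge is charged twice, and that the estimate $N_i\ge d-i+1$ is applied only for $i\le d$ so that the summand stays nonnegative.
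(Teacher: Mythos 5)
Your proof is correct and is essentially the paper's argument in non-inductive form: the paper removes one vertex of degree at least $d$ together with its $\geq d$ incident edges and uses linearity to conclude the remaining hypergraph is $(d-1,d-1)$-rich (each other vertex loses at most one edge), which, unrolled, is exactly your smallest-index charging with contributions $d, d-1, \dots, 1$. The tightness witness ($K_{d+1}$ viewed as a $2$-uniform linear hypergraph) is also the same as in the paper.
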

\begin{proof}

  We proceed by induction over $d$.  Clearly, the assertion of the
  lemma is true for $d=0$. Now let $H=(V,E)$ be a linear $(d,d)$-rich
  hypergraph for $d\ge1$. Choose some vertex $v$ of degree at least
  $d$ in $H$ and let $H'=(V,E')$ be the hypergraph with $E' := E
  \setminus \{e \in E \,|\, e \ni v \}$. We have (i) $|E| \ge |E'| +
  d$, (ii) $H'$ is linear, since this property is inherited when edges
  are removed, and (iii) $H'$ is $(d-1,d-1)$-rich, since for no vertex
  other than $v$ the degree decreases by more than $1$ due to the
  linearity of $H$. It follows hat $|E| \ge \binom{d}{2} + d =
  \binom{d+1}{2}$.  The complete $2$-uniform hypergraph (graph, so to
  say) on $d+1$ vertices shows that the bound given is tight for all
  $d \in \NNnull$.
\end{proof}

\begin{proof}[Proof of the lower bound in Theorem~\ref{theorem-bounds}]
  A weakly linear $k$-CNF formula $F$ is a linear $k$-uniform
  hypergraph, with literals as vertices.  If $F$ is unsatisfiable,
  then by Lemma~\ref{lemma-frequent}, it is
  $(\focc(k-1)+1,\focc(k-1)+1)$-rich.  By Lemma~\ref{lemma-rich}, $F$
  has at least ${\focc(k-1)+2 \choose 2} > \frac{4^k}{8e^2 (k-1)^2}$
  clauses.
\end{proof}

There is an obvious generalization of the notion of being linear.  We
say a CNF formula is {\em $b$-linear}, if any two distinct clauses
$C,D \in F$ fulfill $|\var(C) \cap \var(D)| \leq b$, and {\em weakly
  $b$-linear} if $|C \cap D| \leq b$ holds for all distinct $C,D \in
F$. Thus, a (weakly) $1$-linear formula is (weakly) linear.  We can
generalize Theorem~\ref{theorem-bounds} for $b \geq 2$.  However, the
proofs do not introduce new ideas and goes along 
the lines of the proofs presented above.

\begin{theorem}
  Let $b \geq 2$. Every weakly $b$-linear $k$-CNF formula with at most
  $\frac{2^{k(1 + \frac{1}{b})}}{2^{b+2}e^2k^{2+\frac{1}{b}}}$ clauses
  is satisfiable. There exists an unsatisfiable $b$-linear $k$-CNF
  formula with at most $2^{b+1} (k2^k)^{1 + \frac{1}{b}}$ clauses.
\label{theorem-b-linear}
\end{theorem}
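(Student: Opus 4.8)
The plan is to mimic the two halves of the proof of Theorem~\ref{theorem-bounds}, carrying the extra slack permitted by $b$-linearity through each of the lemmas used there.

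\textbf{Upper bound.}
First I would generalize Lemma~\ref{random-formula}: if $H$ is a weakly $b$-linear $k$-uniform hypergraph with $n$ vertices and $m$ edges, then the ``any two clauses share at most $b$ literals'' condition means that for a \emph{fixed} pair of clauses, a random assignment $\alpha$ fails to satisfy one \emph{given} the other with probability roughly $2^{-k}\cdot(1\pm o(1))$ after accounting for the shared literals; but the clean way is simpler. Instead I would construct a $b$-linear hypergraph directly and then use the same first-moment computation as in Lemma~\ref{random-formula}, which only ever needed that $\Pr[\alpha \text{ sat } F]=(1-2^{-k})^m$ — and that identity holds regardless of linearity. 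So the only place $b$ enters the upper bound is in how dense a $b$-linear hypergraph can be. The natural move is to take the Kuzjurin/Vandermonde hypergraph of Lemma~\ref{lemma-vdm} but drop fewer rows of the Vandermonde matrix: using a $(k-b-1)\times k$ matrix gives a system whose solution space, for fixed values on any $b+1$ coordinates, is determined — hence any $b+1$ vertices lie in at most one edge, which is even stronger than $b$-linearity. This yields a $b$-linear $k$-uniform hypergraph on $n = kq$ vertices with $q^{b+1}$ edges. Then $\frac{m}{n} = \frac{q^{b+1}}{kq} = \frac{q^b}{k} \geq 2^k$ whenever $q \geq (k2^k)^{1/b}$, and choosing a prime power $q$ in $\{(k2^k)^{1/b}, \dots, 2(k2^k)^{1/b}\}$ (Bertrand) gives $m = q^{b+1} \leq 2^{b+1}(k2^k)^{1+1/b}$ clauses, as claimed.

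\textbf{Lower bound.}
Here I would retrace Lemmas~\ref{lemma-frequent} and~\ref{lemma-rich}. For the analogue of Lemma~\ref{lemma-frequent}: given a weakly $b$-linear $k$-CNF formula $F$ with few high-degree variables, remove $b$ literals of highest degree from each clause to get a $(k-b)$-CNF formula $F'$; the argument that $\occ_{F'}(u)$ stays bounded goes through because if $t$ clauses of $F'$ contain $u$, they came from clauses $C_i$ by deleting $b$-element sets $\{v_{i,1},\dots,v_{i,b}\}$, and now one counts that among the deleted literals not too many coincidences can occur — each literal $w$ can be a deleted literal of at most $\binom{?}{?}$-many clauses containing $u$, controlled by weak $b$-linearity applied to $\{u,w\}\subseteq C_i \cap C_j$ being impossible for more than the allowed overlap. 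The clean statement should be: $F$ with at most $c_b(\focc(k-b)+1)$ variables of degree $\geq \focc(k-b)+1$ is satisfiable, for a suitable $c_b$. Then the richness lemma, Lemma~\ref{lemma-rich}, needs a $b$-linear version: a $b$-linear $(d,d)$-rich hypergraph has $\Omega(d^{1+1/b})$ edges — the induction removing one high-degree vertex $v$ now loses at most $b$ off each other vertex's degree (by $b$-linearity), so one removes a vertex and the hypergraph stays $(d-b,\,\cdot)$-rich after losing $\geq d/b$-ish... actually the right bookkeeping is that $d$ edges through $v$ kill at most $d\binom{k-1}{\cdot}$ pairs, and a double-counting of the $\binom{n}{b+1}$ $(b+1)$-subsets against edges each occupying $\binom{k}{b+1}$ of them gives $m \geq \binom{d+1}{b+1}/\binom{k}{b+1}\cdot(\text{lower order})$; combining with $\focc(k-b) \geq \frac{2^{k-b}}{e(k-b)}$ yields $|F| \geq \Omega\!\left(\frac{2^{k(1+1/b)}}{2^{b+2}e^2 k^{2+1/b}}\right)$.

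\textbf{Main obstacle.}
The hard part will be the $b$-linear analogue of Lemma~\ref{lemma-frequent} (and correspondingly the richness count): in the $1$-linear case the key step was that the deleted literals $v_i$ are \emph{pairwise distinct}, which fails for $b \geq 2$, so one must instead bound the \emph{multiplicity} with which any literal can appear as a deleted literal across the $t$ clauses through $u$, and then control the number of distinct high-degree variables produced. Getting the constants to line up exactly with $2^{b+2}e^2 k^{2+1/b}$ — rather than just the correct order of magnitude — is the fiddly part, and is presumably why the authors say ``the proofs do not introduce new ideas''; I would be content to first prove it up to a constant factor and then optimize the wasteful steps (the choice of how many literals to delete per clause, and the slack in Bertrand's postulate) to recover the stated bound.
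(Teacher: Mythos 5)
Your upper bound is essentially the paper's own argument (the first-moment computation in Lemma~\ref{random-formula} never uses linearity, and dropping rows of the Vandermonde system down to a $(k-b-1)\times k$ matrix gives a $k$-uniform $b$-linear hypergraph with $kq$ vertices and $q^{b+1}$ edges; choosing a prime power $q$ with $\sqrt[b]{k2^k}\leq q<2\sqrt[b]{k2^k}$ gives the stated $2^{b+1}(k2^k)^{1+1/b}$). That half is fine.

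The lower bound, however, has a genuine gap, and it is exactly at the point you flag as the ``main obstacle.'' Your proposed replacement for Lemma~\ref{lemma-frequent} --- ``$F$ with at most $c_b(\focc(k-b)+1)$ variables of degree $\geq \focc(k-b)+1$ is satisfiable'' --- is not just unproven, it is false for $b\geq 2$: if it held, every unsatisfiable weakly $b$-linear $k$-CNF formula would need roughly $4^{k-b}/k^3$ clauses, which for, say, $b=2$ and large $k$ exceeds the $\approx 2^{3}(k2^k)^{3/2}$ clauses of the unsatisfiable $2$-linear formula you just constructed in your own upper bound. The correct statement is much weaker in the number of high-degree literals: the key observation (which replaces ``the deleted literals $v_i$ are pairwise distinct'') is that the deleted $b$-\emph{sets} are pairwise distinct as sets, because two clauses through $u$ whose deleted $b$-sets coincide would share $b+1$ literals, contradicting weak $b$-linearity. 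Hence if $t:=\occ_{F'}(u)\geq\focc(k-b)+1$, then $t\leq\binom{\ell-1}{b}$ where $\ell$ is the number of high-degree literals, so satisfiability is guaranteed whenever $\binom{\ell-1}{b}\leq\focc(k-b)$, and unsatisfiability only forces $\ell\gtrsim\bigl(b!\,\focc(k-b)\bigr)^{1/b}\approx 2^{k/b}/(4ek)^{1/b}$ high-degree literals --- far fewer than $\focc(k-b)$. Consequently your second step is also off track: no $b$-linear analogue of the richness lemma (Lemma~\ref{lemma-rich}) is used or needed, and your bookkeeping there is internally inconsistent (an ``$\Omega(d^{1+1/b})$'' claim next to a count of the form $\binom{d+1}{b+1}/\binom{k}{b+1}\approx(d/k)^{b+1}$, and a $(d,d)$-richness frame although the number of high-degree literals is only about $\focc^{1/b}$, not $\focc$). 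The paper finishes with the plain double count $k|F|=\sum_u\occ_F(u)\geq\ell\cdot(\focc(k-b)+1)$, which with $\ell-1\geq\sqrt[b]{2^k/(4ek)}$ and $\focc(k-b)+1\geq 2^{k-b}/(2ek)$ already yields $|F|>2^{k(1+1/b)}/(2^{b+2}e^2k^{2+1/b})$; the factor-of-$k$ refinement via Lemma~\ref{lemma-rich} is only relevant in the $b=1$ case.
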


\section{Proof of Theorem~\ref{theorem-treelike}}
\label{construction}

Let $F$ be an unsatisfiable weakly linear $k$-CNF formula, and let $T$
be a resolution tree of minimal size of $F$. We want to show that $T$
has a large number of nodes. It is not difficult to see that a
resolution tree of minimal size is {\em regular}, meaning that no
variable is resolved more than once on a path from a leaf to the root.
See Urquhart~\cite{Urquhart95}, Lemma 5.1, for a proof of this fact.
We take a random walk of length $\ell$ in $T$ starting at the root, in
every step choosing randomly to go to one of the two children of the
current node. If we arrive at a leaf, we stay there.  We claim that if
$\ell \leq \sqrt{2^{k-2}}$, then with probability at least
$\frac{1}{2}$, our walk does not end at a leaf. Thus, $T$ has at least
$2^{\ell-1}$ inner vertices at distance $\ell$ from the root, thus
at least $2^{2^{\frac{k}{2}-1}}$ leaves.\\

\begin{figure}
  \label{figure-tree}
  \begin{center}
    \includegraphics[width=0.4\textwidth]{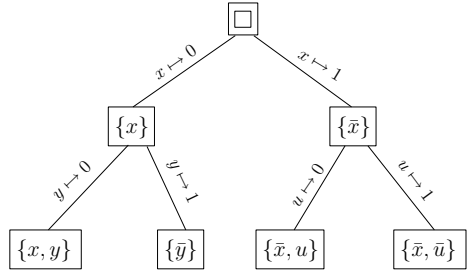}
    \caption{A resolution tree, with its 
      edges labeled in the obvious way. Every
      clause is unsatisfied when applying the assignments on the path
      to the root.}
  \end{center}
\end{figure}

As illustrated in Figure~1, we label each edge in $T$
with an assignment. If $C$ is the resolvent of $D_1$ and $D_2$, $x \in
D_1$ and $\bar{x} \in D_2$, we label the edge from $C$ to $D_1$ by $x
\mapsto 0$ and from $C$ to $D_2$ by $x \mapsto 1$. Each path from the
root to a node gives a partial assignment $\alpha$. If that node is
labeled with clause $C$, then $C$ evaluates to \texttt{false} under
$\alpha$. In our random walk, let $\alpha_i$ denote the partial
assignment associated with the first $i$ steps.  $\alpha_0$ is the
empty assignment, and $\alpha_i$ assigns exactly $i$ variables (if we
are not yet at a leaf). We set $F_i := F^{[\alpha_i]}$, i.e., the
formula obtained from $F$ by fixing the variables according to the
partial assignment $\alpha_i$. For a formula $G$, we define the {\em
  weight} $w(G)$ to be
\begin{eqnarray}
w(G) := \sum_{C \in G, |C| \leq k-2} 2^{k - |C|} \ .
\label{contrib}
\end{eqnarray}
Since $F$ is a $k$-CNF formula, $w(F) = 0$. If some formula $G$
contains the empty clause, then $w(G) \geq 2^k$. In our random walk,
$w(F_i)$ is a random variable.

\begin{lemma}
$\E[w(F_{i+1})] \leq \E[w(F_i)] + 4i$.
\end{lemma}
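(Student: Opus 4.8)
The plan is to establish the stronger pointwise statement that for every node $a$ of $T$ at distance $i$ from the root one has $\E[w(F_{i+1})-w(F_i)\mid a]\le 4i$, where the conditioning means that the random walk sits at $a$ after $i$ steps; averaging over $a$ then gives the lemma. If $a$ is a leaf this is trivial, since then the walk stays put and $F_{i+1}=F_i$. So assume $a$ is an inner node. It is labelled with a resolvent on some variable $x$, and its two children are reached via $x\mapsto 0$ and $x\mapsto 1$. Since a resolution tree of minimal size is regular, $x$ is not among the $i$ variables already fixed by $\alpha_i$; hence $F_i$ may contain clauses with the literal $x$ and clauses with the literal $\bar x$, and these two (disjoint) sets, which I call $P$ and $N$, are exactly the clauses of $F_i$ that change when $x$ is fixed.

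First I would record how $w$ reacts. Write $g(s):=2^{k-s}$ for $s\le k-2$ and $g(s):=0$ otherwise, so that $w(G)=\sum_{C\in G}g(|C|)$. Setting $x\mapsto 1$ removes every clause of $P$ (its contribution drops by $g(|C|)$) and shrinks every clause of $N$ by one literal (its contribution changes by $g(|C|-1)-g(|C|)$); setting $x\mapsto 0$ does the same with $P$ and $N$ interchanged. Averaging over the two values of $x$, each clause $C\in P\cup N$ contributes exactly $\frac12\bigl((g(|C|-1)-g(|C|))-g(|C|)\bigr)=\frac12\bigl(g(|C|-1)-2g(|C|)\bigr)$ to $\E[w(F_{i+1})-w(F_i)\mid a]$. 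A one-line case check on $s=|C|$ gives $g(s-1)-2g(s)=0$ for all $s\in\{1,\dots,k\}$ with $s\ne k-1$, while $g(k-2)-2g(k-1)=4$: this is precisely the role of the threshold $k-2$ in the definition of $w$, namely that while the size stays in the ``active'' range, the halving of the weight is exactly cancelled by the factor picked up when the clause shrinks. Therefore
\begin{equation*}
\E[w(F_{i+1})-w(F_i)\mid a]\;=\;2\,\bigl|\{\,C\in F_i : \var(C)\ni x,\ |C|=k-1\,\}\bigr| .
\end{equation*}

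It remains to bound the number of $(k-1)$-clauses of $F_i$ through $x$, and this is where weak linearity enters. Such a clause is the restriction of a clause $C\in F$ in which exactly one literal has been set to $0$ by $\alpha_i$; call that the \emph{dead literal} of $C$. Only $i$ literals are killed by $\alpha_i$ (one per fixed variable), so it is enough to check that two distinct $(k-1)$-clauses of $F_i$ containing the literal $x$ have different dead literals: if they shared the dead literal $\ell$, the clauses of $F$ they come from would both contain $\{x,\ell\}$, contradicting $|C\cap C'|\le 1$. Hence at most $i$ of them contain the literal $x$, at most $i$ contain $\bar x$, so at most $2i$ in total, and the pointwise bound $4i$ follows.

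The real obstacle, and the reason the naive estimate fails, is that a clause of $F_i$ can be much shorter than $k-1$, so it carries weight up to $2^{k-1}$ and the increase cannot be bounded clause by clause by a constant. The two ingredients that rescue the argument are (i) the averaging over the two values of $x$, which annihilates the contribution of every clause of size at most $k-2$, leaving only clauses of size exactly $k-1$, and (ii) weak linearity, which caps the number of those by $2i$. When writing this out I would also make explicit that $T$ may be taken regular (so $x$ is a fresh variable at $a$), and I would read $F_i$ as the multiset of restrictions $C|_{\alpha_i}$ of the clauses $C\in F$; this can only increase $w$, which is harmless, since the subsequent use of the lemma only needs an upper bound on the probability that $F_\ell$ contains the empty clause, and that event forces $w(F_\ell)\ge 2^k$ in either reading.
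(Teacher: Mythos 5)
Your proposal is correct and takes essentially the same route as the paper: the identical per-clause expectation computation (clauses of size at most $k-2$ through the resolved variable have zero expected change, each $(k-1)$-clause contributes $+2$ in expectation), with weak linearity capping the number of relevant $(k-1)$-clauses by $2i$. The only cosmetic difference is that you get the $2i$ bound directly via the dead-literal injection, while the paper maintains the invariant $d_{k-1}(x,F_i)\le 2i$ inductively (at most $2$ new $(k-1)$-clauses per variable per step); this is the same counting argument in static rather than step-by-step form.
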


Since $w(F_0) = 0$, this implies $\E[w(F_\ell)] \leq 4{\ell \choose 2}
\leq 2\ell^2$. If our random walk ends at a leaf, then $F_\ell$
contains the empty clause, thus $w(F_\ell) \geq 2^k$.  Therefore
$2\ell^2 \geq \E[w(F_\ell)] \geq 2^k \Pr [ \textnormal {the random
  walk ends at a leaf}]$.  We conclude that at least half of all paths
of length $\ell^* = \sqrt{2^{k-2}}$ starting at the root do not end at
a leaf. Thus $T$ has at least $2^{\ell^*-1}$ internal nodes at
distance $\ell^*$ from the root, and thus at least $2^{\ell^*}$
leaves, which proves the theorem. It remains to prove the lemma.

\begin{proof}[Proof of the lemma]
  For a formula $G$ and a variable $x$, let $d_{k-1}(x,G)$ denote the
  number of $(k-1)$-clauses containing $x$ or $\bar{x}$. Since $F_0$
  is a $k$-CNF formula, $d_{k-1}(x,F_0) = 0$, for all variables $x$.
  We claim that $d_{k-1}(x,F_{i+1}) \leq d_{k-1}(x,F_i)+2$ for every
  variable $x$. To see this, note that in step $i$, some variable $y$
  is set to $b \in \{0,1\}$, say to $0$. At most one $k$-clause of
  $F_i$ contains $y$ and $x$, and at most one contains $y$ and
  $\bar{x}$, since $F_i$ is weakly linear, thus $d_{k-1}(x,F_{i+1})
  \leq d_{k-1}(x,F_i)+2$. It follows immediately
  that $d_{k-1}(x, F_i) \leq 2i$.\\

  Consider $w(F_i)$, which was in (\ref{contrib}).  $F_{i+1}$ is
  obtained from $F_i$ by setting some variable $y$ randomly to $0$ or
  $1$. Consider a clause $C$. How does its contribution to
  (\ref{contrib}) change when setting $y$? If (i) $y \not \in \var(C)$
  or $|C|=k$, it does not change. If (ii) $y \in \var(C)$ and $|C|
  \leq k-2$, then with probability $\frac{1}{2}$ each, its
  contribution to (\ref{contrib}) doubles or vanishes. Hence on
  expectation, it does not change. If (iii) $y \in \var(C)$ and $|C| =
  k-1$, then $C$ contributes nothing to $w(F_i)$, and with probability
  $\frac{1}{2}$, it contributes $4$ to $w(F_{i+1})$. On expectation,
  its contribution to (\ref{contrib}) increases by $2$.  Case (iii)
  applies to at most $d_{k-1}(y,F_i) \leq 2i$ clauses.  Hence
  $\E[w(F_{i+1})] \leq \E[w(F_i)] + 4i$.
\end{proof}

\section{Proof of Theorem~\ref{theorem-strictly-treelike}}
 
Let $F$ be a strictly treelike weakly linear $k$-CNF formula $F$, and
let $T$ be a strict resolution tree of $F$. Letters $a,b,c$ denote
nodes of $T$, and $u,v,w$ denote literals. Every node $a$ of $T$ is
labeled with a clause $C_a$. We define a graph $G_a$ with vertex set
$C_a$, connecting $u,v\in C_a$ if $u,v \in D$ for some clause $D \in
F$ that occurs as a label of a leaf in the subtree of $a$. Since $T$
is a strict resolution tree and $F$ is weakly linear, every edge in
$G_a$ comes from a unique leaf of $T$.  Resolution now has a simple
interpretation as a ''calculus on graphs'', see Figure~2. If $a$ is a
leaf, then $G_a = K_k$. Since the root of a resolution tree is labeled
with the empty clause, we have $G_{\rm root}=(\emptyset,\emptyset)$,
\ignore{the graph containing no vertices. We define a kind of
  ``complexity measure'' for nodes $a$ in $T$ in terms of $G_a$. It
  should be small for the root, large for the leaves, and decrease
  only slowly when moving from a leaf to the root. Our complexity
  measure will not be a single number, but a tuple of numbers.} For a
graph $G$, let $\kappa_i(G)$ denote the minimum size of a set $U
\subseteq V(G)$ such that $G-U$ contains no $i$-clique. Here, $G-U$ is
the subgraph of $G$ induced by $V(G)\setminus U$. Thus, $\kappa_1(G) =
|V(G)|$, and $\kappa_2(G)$ is the size of a minimum vertex cover of
$G$. For the complete graph $K_k$, $\kappa_i(K_k) = k-i+1$. We write
$\kappa_i(a) := \kappa_i(G_a)$. The tuple $(\kappa_1(a), \dots,
\kappa_k(a))$ can be viewed as the complexity measure for $a$.  We
observe that if $a$ is a leaf, then $\kappa_i(a)=k-i+1$, and
$\kappa_i({\rm root})=0$, for all $1\leq i \leq k$.  If $a$ is an
ancestor of $b$ in $T$, let $\dist(a,b)$ denote the number of edges in
the $T$-path from $a$ to $b$. Since one resolution step deletes one
literal (and may add several), the next proposition is immediate:

\begin{figure}
\begin{center}
\includegraphics[width=0.5\textwidth]{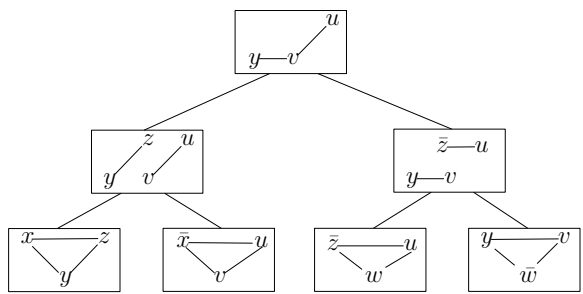}
\caption{Resolution as a calculus on graphs. A resolution step amounts
  to deleting the resolved vertex and taking the union of the two
  graphs.}
\end{center}
\label{figure-graph-resolution}
\end{figure}

\begin{proposition}
  If $b$ is a descendant of $a$ in $T$, then $\kappa_i(b) \leq
  \kappa_i(a)+\dist(a,b)$.
  \label{lipschitz}
\end{proposition}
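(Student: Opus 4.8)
The plan is to first reduce to the one-step case and then iterate. Since $b$ is a descendant of $a$, the $T$-path realizing $\dist(a,b)$ is a chain $a = a_0, a_1, \dots, a_d = b$ in which each $a_{j+1}$ is a child of $a_j$, and $d = \dist(a,b)$. So it suffices to prove $\kappa_i(a_{j+1}) \le \kappa_i(a_j) + 1$ whenever $a_{j+1}$ is a child of $a_j$; summing these inequalities along the chain then yields the claim, the case $d=0$ being trivial.

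For the one-step case, let $b$ be a child of $a$ and let $c$ be the other child, so that $C_a$ is the resolvent of $C_b$ and $C_c$; say $x \in C_b$, $\bar x \in C_c$, and $C_a = (C_b \setminus \{x\}) \cup (C_c \setminus \{\bar x\})$. I would record two facts. First, $C_b \setminus \{x\} \subseteq C_a$, so every vertex of $G_b$ other than $x$ is a vertex of $G_a$. Second, the subtree of $b$ is contained in the subtree of $a$, so every leaf clause below $b$ is also a leaf clause below $a$; hence if $\{u,v\}$ is an edge of $G_b$ with $u,v \ne x$, then $u,v$ occur together in some leaf clause below $a$ and both lie in $C_b\setminus\{x\}\subseteq C_a$, so $\{u,v\}$ is an edge of $G_a$ as well. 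Together these say that $G_b - x$ is a (not necessarily induced) subgraph of $G_a$.

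The proof then concludes with the routine clique-transversal manipulation. Pick $U \subseteq V(G_a)$ with $|U| = \kappa_i(a)$ and $G_a - U$ free of $i$-cliques, and put $U' := (U \cap C_b) \cup \{x\}$, so $|U'| \le \kappa_i(a)+1$. If $G_b - U'$ contained an $i$-clique $K$, then $K$ avoids $x$ and is disjoint from $U$, so by the two facts $K$ is an $i$-clique of $G_a$ disjoint from $U$ --- impossible. Hence $\kappa_i(b) \le |U'| \le \kappa_i(a)+1$, completing the one-step case.

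I do not expect a genuine obstacle here; the proposition is essentially bookkeeping, which is why the paper calls it immediate. The only points that need a moment's care are getting the direction of the containment right (the resolution step \emph{gains} literals when passing from $C_b$ up to $C_a$, so it is $C_b\setminus\{x\}$, not $C_b$, that sits inside $C_a$) and checking that the witnessing clique $K$ truly lives in $G_a$, both of which are covered by the two facts above. Note that neither strictness of $T$ nor weak linearity of $F$ is used in this proposition; those hypotheses enter only later.
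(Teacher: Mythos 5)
Your proof is correct and follows essentially the same route as the paper's (commented-out) argument: reduce to the single-step case where $b$ is a child of $a$, observe that $C_b\setminus\{x\}\subseteq C_a$ so that $G_b - x$ is a subgraph of $G_a$, and then extend a clique-transversal $U$ of $G_a$ by the resolved literal to get one for $G_b$, finishing by induction along the path. The only (harmless) difference is that you intersect $U$ with $C_b$ first to keep the transversal inside $V(G_b)$, a minor tidiness the paper skips.
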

\ignore{
\begin{proof}
  If $\dist(a,b)=0$, it is trivial. Suppose $\dist(a,b)=1$, i.e.  $b$
  is a child of $a$.  Let $C_a$ be the resolvent of $C_b$ and $C_c$,
  $u \in C_b$, $\bar{u} \in C_c$. Since $C_b \setminus \{u\} \subseteq
  C_a$, $G_b - \{u\}$ is a subgraph of $G_a$. If there is a set $U
  \subseteq V(G_a)$ such that $G_a - U$ does not have an $i$-clique,
  then surely $G_b - (U \cup \{u\})$ does not have an $i$-clique,
  either. Thus $\kappa_i(b) \leq \kappa_i(a)+1$. For $\dist(a,b)\geq
  2$, the claim follows immediately by induction.
\end{proof}
  }
At this point we want to give an intuition of the proofs that follow.
Our goal is to show that if the values $\kappa_i(a)$ are small for
some node $a$ in the tree, then the subtree of $a$ is big. The proof
goes roughly as follows: If the subtree of $a$ is small, then there
are many descendants $b$ of $a$ that are not too far from $a$ and have
even smaller subtrees. By induction, we will be able to show that
$\kappa_{i+1}(b)$ is fairly large. Thus, on the path from $b$ to $a$,
not all $(i+1)$-cliques are destroyed, and every such descendant $b$
of $a$ provides $G_a$ with an $(i+1)$-clique. These cliques need not
be vertex-disjoint, but they are edge-disjoint. This implies that
$G_a$ has many vertex-disjoint $i$-cliques, a contradiction to
$\kappa_i(a)$ being small. To make this intuition precise, we have to
define what small and big actually means in this context: We fix a
value $1 \leq \ell \leq k$ and define $\nu_i$ and $\theta_i$ for
$1\leq i \leq \ell$ as follows: $\theta_{\ell} := \left\lfloor
  \frac{k-\ell+1}{2}\right\rfloor -1$ and $\nu_{\ell} := 1$, and for
$1 \leq i < \ell$, we inductively define $\theta_i :=
\left\lfloor\frac{
    2^{\nu_{i+1}\theta_{i+1}-2}}{\theta_{i+1}}\right\rfloor -1$ and
$\nu_{i} := \frac{\nu_{i+1}\theta_{i+1}-1}{\theta_i}
\left\lfloor\frac{\theta_i}{\theta_{i+1}}\right\rfloor$. One should
not worry about these ugly expressions too much, they are only chosen
that way to make the induction go through.  \ignore{
\begin{eqnarray*}
  \theta_{\ell} & := & \left\lfloor \frac{k-\ell+1}{2} 
  \right\rfloor -1 \\

  \nu_{\ell} & := & 1\\
  & & \\
  \theta_i & := & \left\lfloor\frac{
      2^{\nu_{i+1}\theta_{i+1}-2}}{\theta_{i+1}}\right\rfloor -1 
\ , \quad 1 \leq i < \ell \\
  \nu_{i} & := & \frac{\nu_{i+1}\theta_{i+1}-1}{\theta_i}
  \left\lfloor\frac{\theta_i}{\theta_{i+1}}\right\rfloor 
  \ , \quad 1 \leq i < \ell \ .
\end{eqnarray*}
}
For the right value of $\ell$, one checks that $\theta_1$ is a tower
function in $k$.  More precisely, for any $\epsilon >0$, there exists
a $c \in \mathbb{N}$ such that when choosing $\ell = k - c$, then
$\theta_1 \geq {\rm tower}_{2-\epsilon}(k-c)$. The following theorem
is a more precise version of Theorem~\ref{theorem-strictly-treelike}.

\begin{theorem}
  Let $F$ be a strictly treelike linear $k$-CNF formula. Then $F$ has
  at least $2^{\nu_1\theta_1}$ clauses.
\end{theorem}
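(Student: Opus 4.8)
The plan is to prove, by downward induction on $i$ (from $i=\ell$ down to $i=1$), the statement $(\ast_i)$: \emph{for every node $a$ of $T$ with $\kappa_i(a)\le\theta_i$, the subtree of $T$ rooted at $a$ has at least $2^{\nu_i\theta_i}$ leaves}. Granting this, apply $(\ast_1)$ to the root of $T$: since $\kappa_1(\mathrm{root})=0\le\theta_1$, the whole tree has at least $2^{\nu_1\theta_1}$ leaves, and because $T$ is a \emph{strict} resolution tree these leaves carry pairwise distinct clauses of $F$, so $|F|\ge 2^{\nu_1\theta_1}$. (Only weak linearity is used below, so the bound in fact holds for every strictly treelike weakly linear $k$-CNF formula.)

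For the base case $i=\ell$ I would run Proposition~\ref{lipschitz} downward. Every leaf $b$ has $G_b=K_k$, hence $\kappa_\ell(b)=k-\ell+1$; combined with $\kappa_\ell(b)\le\kappa_\ell(a)+\dist(a,b)$ and $\kappa_\ell(a)\le\theta_\ell$ this forces $\dist(a,b)\ge k-\ell+1-\theta_\ell\ge\theta_\ell$, using $\theta_\ell=\lfloor(k-\ell+1)/2\rfloor-1$. So in the subtree rooted at $a$ every leaf has depth at least $\theta_\ell$; since a resolution tree is a full binary tree (every internal node has two children), and a full binary tree all of whose leaves have depth at least $d$ has at least $2^d$ leaves, the subtree of $a$ has at least $2^{\theta_\ell}=2^{\nu_\ell\theta_\ell}$ leaves.

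For the inductive step, assume $(\ast_{i+1})$ and let $a$ satisfy $\kappa_i(a)\le\theta_i$; suppose for contradiction that the subtree of $a$ has fewer than $2^{\nu_i\theta_i}$ leaves. The contrapositive of $(\ast_{i+1})$ says: any descendant $b$ of $a$ whose subtree has fewer than $2^{\nu_{i+1}\theta_{i+1}}$ leaves has $\kappa_{i+1}(b)\ge\theta_{i+1}+1$. The heart of the step is an \emph{extraction lemma}: from $a$ one can pull out pairwise incomparable descendants $b_1,\dots,b_t$, each at distance at most $\theta_{i+1}$ from $a$, each with fewer than $2^{\nu_{i+1}\theta_{i+1}}$ leaves in its subtree, and with $t>\binom{\theta_i}{2}$. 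I would prove this by descending from $a$ in $\lfloor\theta_i/\theta_{i+1}\rfloor$ successive ``blocks'' of $\theta_{i+1}$ levels each, and bounding at every block how many of the at most $2^{\theta_{i+1}}$ frontier nodes can have a subtree with $\ge 2^{\nu_{i+1}\theta_{i+1}}$ leaves before the cumulative leaf count forces $\ge 2^{\nu_i\theta_i}$ leaves below $a$; multiplying the per-block branching then yields enough ``light'' incomparable nodes. I also use the graph-calculus behind Proposition~\ref{lipschitz}: if $u_1,\dots,u_d$ are the literals resolved on the $T$-path from a descendant $b$ up to $a$ (so $d=\dist(a,b)$), then $G_b-\{u_1,\dots,u_d\}$ is a subgraph of $G_a$.

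Now combine. For each $b_j$ we have $\kappa_{i+1}(G_{b_j})\ge\theta_{i+1}+1$, while only $d_j=\dist(a,b_j)\le\theta_{i+1}$ vertices get deleted on the way to $a$; hence $G_{b_j}-\{u_1,\dots,u_{d_j}\}\subseteq G_a$ still contains an $(i+1)$-clique $K^{(j)}$, and every edge of $K^{(j)}$ comes from a leaf of the subtree of $b_j$. As the $b_j$ are pairwise incomparable their subtrees have disjoint leaf sets, and since every edge of $G_a$ comes from a unique leaf of $T$, the cliques $K^{(1)},\dots,K^{(t)}$ are pairwise edge-disjoint. Finally pick $U\subseteq C_a$ with $|U|=\kappa_i(a)\le\theta_i$ and $G_a-U$ free of $i$-cliques: each $(i+1)$-clique of $G_a$ must contain at least two vertices of $U$ (otherwise an $i$-clique survives in $G_a-U$), hence an edge with both endpoints in $U$; edge-disjointness makes these $t$ edges distinct, so $t\le\binom{|U|}{2}\le\binom{\theta_i}{2}$, contradicting $t>\binom{\theta_i}{2}$. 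This establishes $(\ast_i)$ and completes the induction. The step I expect to be the real obstacle is the extraction lemma: it is precisely this bookkeeping --- getting the per-block number of ``heavy'' frontier nodes to multiply up past the threshold $\binom{\theta_i}{2}$ rather than be swamped by the $2^{\theta_{i+1}}$ bound on a frontier --- that forces the otherwise opaque recursions defining $\theta_i$ and $\nu_i$.
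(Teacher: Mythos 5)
Your overall architecture (downward induction on $i$, leaves-below-a-node as the measure of ``big'', edge-disjoint $(i+1)$-cliques in the label graph obtained from incomparable light descendants, a hitting-set count against $\kappa_i$) is in the spirit of the paper, but two steps are genuinely broken. First, your induction statement $(\ast_i)$ is false for small $i$: for every node $a$ one has $\kappa_i(a)\le |C_a|\le k$, while $\theta_i$ is tower-like in $k$ and hence $\theta_i\gg k$ for all but the last few indices; so every \emph{leaf} of $T$ satisfies the hypothesis $\kappa_i(a)\le\theta_i$ yet has exactly one leaf in its subtree. This is exactly why the paper does not induct on the single condition $\kappa_i(a)\le\theta_i$ but on $i$-extendability, i.e.\ $\kappa_j(a)\le\theta_j$ for \emph{all} $i\le j\le\ell$: the constraint at $j=\ell$, where $\theta_\ell<k-\ell+1=\kappa_\ell(\mathrm{leaf})$, is what keeps such nodes away from the leaves and makes the base case (and every later stage) meaningful. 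Your contrapositive ``few leaves below $b$ implies $\kappa_{i+1}(b)\ge\theta_{i+1}+1$'' is already refuted by taking $b$ a leaf.

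Second, the extraction lemma you flag as the main obstacle is not just hard, it is numerically impossible with these parameters: pairwise incomparable descendants of $a$ at distance at most $\theta_{i+1}$ number at most $2^{\theta_{i+1}}$, whereas you need more than $\binom{\theta_i}{2}$ of them, and by the recursion $\theta_i+1=\bigl\lfloor 2^{\nu_{i+1}\theta_{i+1}-2}/\theta_{i+1}\bigr\rfloor$ with $\nu_{i+1}$ close to $1$ one has $\binom{\theta_i}{2}\approx 2^{2\nu_{i+1}\theta_{i+1}}/(8\theta_{i+1}^2)\gg 2^{\theta_{i+1}}$. If instead you let the $b_j$ sit at distance up to $\theta_i$ (as your block-descent would actually produce), the cliques no longer survive the projection into $G_a$, because $\kappa_{i+1}(b_j)\ge\theta_{i+1}+1$ only protects against deleting $\theta_{i+1}$ vertices along the path. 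The paper's proof resolves precisely this tension by working locally and with a linear rather than quadratic count: the bad descendants $c$ are taken within distance $\theta_{i+1}$ of an intermediate $(i+1)$-extendable node $b$, their $(i+1)$-cliques are collected in $G_b$, a hitting set of size $\kappa_{i+1}(b)\le\theta_{i+1}$ plus pigeonhole yields $2\theta_i+1$ cliques through one common vertex, hence $2\theta_i+1$ \emph{vertex-disjoint} $i$-cliques and $\kappa_i(b)\ge 2\theta_i+1$, and only then is this transferred to $a$ via $\kappa_i(a)\ge\kappa_i(b)-\dist(a,b)\ge\theta_i+1$; the antichain at distance $\theta_i$ is then built by iterated inflation. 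To repair your argument you would have to strengthen $(\ast_i)$ to full extendability and replace the ``two vertices of $U$ per clique'' count by this local pigeonhole, at which point you have reproduced the paper's proof.
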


\begin{proof}
  A node $a$ in $T$ is {\em $i$-extendable} if $\kappa_j(a) \leq
  \theta_j$ for each $i \leq j \leq \ell$.  We observe that if $a$ is
  $i$-extendable, it is also $(i+1)$-extendable. For $i = \ell+1$, the
  condition is void, so every node is $(\ell+1)$-extendable. Also, the
  root is $1$-extendable, since $\kappa_1({\rm root})=0$.

  \begin{definition}
    A set $A$ of descendants of $a$ in $T$ such that (i) no vertex in
    $A$ is an ancestor of any other vertex in $A$ and (ii) $\dist(a,b)
    \leq d$ for all $b \in A$ is called an {\em antichain of $a$ at
      distance at most $d$}. If furthermore every $b\in A$ is
    $i$-extendable, we call $A$ an $i$-extendable antichain.
  \end{definition}

  \begin{lemma}
    Let $1\leq i \leq \ell$, and let $a$ be a node in $T$. If $a$ is
    $i$-extendable, then there is an $(i+1)$-extendable antichain $A$ of
    $a$ at distance at most $\theta_i$ such that $|A| =
    2^{\nu_i\theta_i}$.
    \label{lemma-extend}
  \end{lemma}

  \begin{proof}
    We use induction on $\ell -i$. For the base case $i=\ell$, we have
    $\kappa_{\ell}(a) \leq \theta_{\ell}$, as $a$ is
    $\ell$-extendable.  Since each leaf $b$ of $T$ has
    $\kappa_{\ell}(b) = k-\ell+1 \geq 2\theta_{\ell}+2$,
    Proposition~\ref{lipschitz} tells us that every leaf in the
    subtree of $a$ has distance at least $\theta_{\ell}+2$ from $a$.
    Since $T$ is a complete binary tree, there are $2^{\theta_\ell}$
    descendants of $a$ at distance exactly $\theta_\ell$ from $a$.
    This is the desired antichain $A$ of $a$.  Since every node is
    $(\ell+1)$-extendable, the base case holds.  For the step, let $a$
    be $i$-extendable, for $1 \leq i < \ell$.\\
    
    {\em Claim:} Let $b$ be a descendant of $a$ with $\dist(a,b) \leq
    \theta_i$.  If $b$ is $(i+1)$-extendable, then there is an
    $(i+1)$-extendable antichain $A$ of $b$ at distance at most
    $\theta_{i+1}$ of size $2^{\nu_{i+1}\theta_{i+1}-1}$.  
   \begin{figure}
    \begin{center}
    \includegraphics[width=0.3\textwidth]{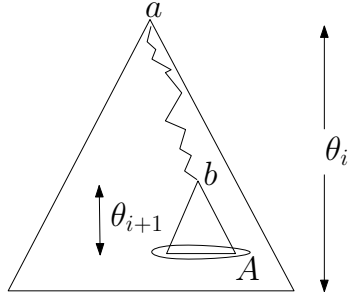}
    \caption{Illustration of the claim in the proof of
      Lemma~\ref{lemma-extend}. If node $a$ is $i$-extendable, and $b$
      is a close $(i+1)$-extendable descendant of $a$, then $b$
      itself has many close descendants $A$, at least half of
      which are $(i+1)$-extendable themselves.}
    \end{center}
    \label{figure-extendable}
    \end{figure}            

    \begin{proof}[Proof of the claim]
      By applying the induction hypothesis of the lemma to $b$, there
      is an $(i+2)$-extendable antichain $A$ of $b$ at distance at
      most $\theta_{i+1}$ of size $2^{\nu_{i+1}\theta_{i+1}}$.  Let
      $A_{\rm good} := \{c \in A \ | \ \kappa_{i+1}(c) \leq
      \theta_{i+1}\}$. This is an $(i+1)$-extendable antichain. If
      $A_{\rm good}$ contains at least half of $A$, we are done.  See
      Figure~3 for an illustration. Write $A_{\rm bad} := A \setminus
      A_{\rm good}$ and suppose for the sake of contradiction that
      $A_{\rm bad} > 2^{\nu_{i+1}\theta_{i+1}-1}$.  Consider any $c
      \in A_{\rm bad}$.  On the path from $c$ to $b$, in each step
      some literal gets removed (and others may be added). Let $P$
      denote the set of the removed literals. Then $C_c \setminus
      \{P\} \subseteq C_b$, and $G_c - P$ is a subgraph of $G_b$. Node
      $c$ is not $(i+1)$-extendable, thus $\kappa_{i+1}(c) \geq
      \theta_{i+1}+1$.  Since $|P| = \dist(b,c) \leq \theta_{i+1}$,
      the graph $G_c - P$ contains at least one $(i+1)$-clique, which
      is also contained in $G_b$.  This holds for every $c\in A_{\rm
        bad}$, and by weak linearity, $G_b$ contains at least $|A_{\rm
        bad}|$ edge disjoint $(i+1)$-cliques. Since $b$ is
      $(i+1)$-extendable, there exists a set $U \subseteq V(G_v),
      |U|=\kappa_{i+1}(b)$ such that $G_b - U$ contains no
      $(i+1)$-clique. Each of the $|A_{\rm bad}|$ edge-disjoint
      $(i+1)$-cliques $G_b$ contains some vertex of $U$, thus some
      vertex $v \in U$ is contained in at least $\frac{|A_{\rm
          bad}|}{|U|} \geq
      \frac{2^{\nu_{i+1}\theta_{i+1}-1}}{\theta_{i+1}} \geq 2\theta_i
      + 1$ edge-disjoint $(i+1)$-cliques. Two such cliques overlap in
      no vertex besides $v$, hence $G_b$ contains at least
      $2\theta_i+1$ {\em vertex-disjoint} $i$-cliques, thus
      $\kappa_i(b) \geq 2\theta_i+1$.  By Proposition~\ref{lipschitz},
      $\kappa_i(a) \geq \kappa_i(b) - \dist(a,b) \geq \theta_i + 1$.
      This contradicts the assumption of Lemma~\ref{lemma-extend} that
      $a$ is $i$-extendable. We conclude that $|A_{\rm bad}| \leq
      \frac{1}{2} |A|$, which proves the claim.
    \end{proof}

    Let us continue with the proof of the lemma. If $A$ is an
    $(i+1)$-extendable antichain of $a$ at distance $d \leq \theta_i$,
    then by the claim for each vertex $b \in A$ there exists an
    $(i+1)$-extendable antichain of $b$ at distance at most
    $\theta_{i+1}$, of size $2^{\nu_{i+1}\theta_{i+1}-1}$.  Their
    union is an $(i+1)$-extendable antichain $A'$ of $a$ at distance
    at most $d + \theta_{i+1}$, of size
    $|A|2^{\nu_{i+1}\theta_{i+1}-1}$.  Hence we can ``inflate'' $A$ to
    $A'$, as long as $d \leq \theta_i$.  Starting with the
    $(i+1)$-extendable antichain $\{a\}$ and inflate it
    $\left\lfloor\frac{\theta_{i}}{\theta_{i+1}}\right\rfloor$ times,
    and obtain a final $(i+1)$-extendable antichain of $a$ at distance at
    most $\theta_i$ of size at least
    $\left(2^{\nu_{i+1}\theta_{i+1}-1}\right)^{\left
        \lfloor\frac{\theta_{i}}{\theta_{i+1}}\right\rfloor} =
    2^{\nu_i\theta_i} $.
  \end{proof}
  
  Applying Lemma~\ref{lemma-extend} to the root of $T$, which is
  $1$-extendable, we obtain an antichain $A$ of size
  $2^{\nu_1\theta_1}$ nodes. Since $T$ has at least $|A|$ leaves, this
  proves the theorem.
\end{proof}

\section{Open Problems}

Let $\flin(k)$ be the largest integer $d$ such that any linear $k$-CNF
formula $F$ with $d(F) \leq d$ is satisfiable. Clearly $\flin(k) \geq
f(k)$, and from the proof of the upper bound in
Theorem~\ref{theorem-bounds} it follows that $\flin(k) \leq 2k2^k$. Is
there a significant gap between $f(k)$ and $\flin(k)$? It is not
difficult to show that $f(2)=\flin(2)=2$, but we do not know the value
of $\flin(k)$ for any $k \geq 3$.
\ignore{ The bound $\flin(2) \geq 2$ follows from
$f(k) \geq k$, and $\flin(2)\leq 2$ is witnessed by
$$
(\bar{x}_1 \vee x_2) \wedge (\bar{x}_2 \vee x_3) \wedge
(\bar{x}_3 \vee x_4) \wedge (\bar{x}_4 \vee x_1) \wedge
(x_1 \vee x_3) \wedge (\bar{x}_2 \vee \bar{x}_4)  \ .
$$
}
How do unsatisfiable linear $k$-CNF formulas look like? Can one find
an explicit construction of an unsatisfiable linear $k$-CNF formula
whose size is singly exponential in $k$? We suspect one has to come up
with some algebraic construction.  What is the resolution complexity
of linear $k$-CNF formulas?  Tree resolution complexity is doubly
exponential in $k$.  We suspect the same to be true for general
resolution.

\section*{Acknowledgments}

My thanks go to Emo Welzl, Robin Moser, Heidi Gebauer, Andreas Razen
and Philipp Zumstein for very helpful and fruitful discussions.

\bibliographystyle{abbrv}

\bibliography{scheder}

\begin{thebibliography}{10}

\bibitem{AS00}
N.~Alon and J.~H. Spencer.
\newblock {\em {The probabilistic method}}.
\newblock Wiley-Interscience Series in Discrete Mathematics and Optimization.
  Wiley-Interscience [John Wiley \& Sons], New York, second edition, 2000.
\newblock With an appendix on the life and work of Paul Erd\H os.

\bibitem{BW01}
E.~Ben-Sasson and A.~Wigderson.
\newblock Short proofs are narrow---resolution made simple.
\newblock {\em J. ACM}, 48(2):149--169, 2001.

\bibitem{dll}
M.~Davis, G.~Logemann, and D.~Loveland.
\newblock A machine program for theorem-proving.
\newblock {\em Comm. ACM}, 5:394--397, 1962.

\bibitem{DDKB98}
G.~Davydov, I.~Davydova, and K.~B{\"u}ning.
\newblock An efficient algorithm for the minimal unsatisfiability problem for a
  subclass of {CNF}.
\newblock {\em Ann. Math. Artificial Intelligence}, 23(3-4):229--245, 1998.

\bibitem{EL75}
P.~Erd{\H{o}}s and L.~Lov\'asz.
\newblock Problems and results on 3-chromatic hypergraphs and some related
  questions.
\newblock In A.~Hajnal, R.~Rado, and V.~T. S{\'o}s, editors, {\em Infinite and
  Finite Sets (to Paul Erd{\H o}s on his 60th birthday), Vol. II}, pages
  609--627. North-Holland, 1975.

\bibitem{erdos91}
P.~Erd{\H{o}}s and J.~Spencer.
\newblock Lopsided {L}ov\'asz {Local Lemma} and {L}atin transversals.
\newblock {\em Discrete Appl. Math.}, 30(2-3):151--154, 1991.
\newblock ARIDAM III (New Brunswick, NJ, 1988).

\bibitem{Gebauer09}
H.~Gebauer.
\newblock Disproof of the neighborhood conjecture with implications to {SAT}.
\newblock In A.~Fiat and P.~Sanders, editors, {\em 17th Annual European
  Symposium on Algorithms (ESA 2009)}, volume 5757 of {\em Lecture Notes in
  Computer Science}, pages 764--775. Springer, 2009.

\bibitem{szeider05}
S.~Hoory and S.~Szeider.
\newblock Computing unsatisfiable $k$-{SAT} instances with few occurrences per
  variable.
\newblock {\em Theoretical Computer Science}, 337(1-3):347--359, 2005.

\bibitem{HS06}
S.~Hoory and S.~Szeider.
\newblock A note on unsatisfiable k-{CNF} formulas with few occurrences per
  variable.
\newblock {\em SIAM Journal on Discrete Mathematics}, 20(2):523--528, 2006.

\bibitem{KK2008}
A.~V. Kostochka and M.~Kumbhat.
\newblock Coloring uniform hypergraphs with few edges, {\em manuscript}.

\bibitem{KR2009}
A.~V. Kostochka and V.~R\"odl.
\newblock Constructions of sparse uniform hypergraphs with high chromatic
  number, {\em manuscript}.

\bibitem{KST93}
J.~Kratochv\'{i}l, P.~Savick\'{y}, and Z.~Tuza.
\newblock {One more occurrence of variables makes satisfiability jump from
  trivial to {NP}-complete}.
\newblock {\em SIAM Journal of Computing}, 22(1):203--210, 1993.

\bibitem{Kuzjurin1995}
N.~N. Kuzjurin.
\newblock On the difference between asymptotically good packings and coverings.
\newblock {\em European J. Combin.}, 16(1):35--40, 1995.

\bibitem{LS07}
L.~Lu and L.~Sz{\'e}kely.
\newblock Using {L}ov\'asz {Local Lemma} in the space of random injections.
\newblock {\em Electron. J. Combin.}, 14(1):Research Paper 63, 13 pp.
  (electronic), 2007.

\bibitem{PRS06}
S.~Porschen, E.~Speckenmeyer, and B.~Randerath.
\newblock On linear {CNF} formulas.
\newblock In {\em SAT}, pages 212--225, 2006.

\bibitem{PSZ08}
S.~Porschen, E.~Speckenmeyer, and X.~Zhao.
\newblock Linear {CNF} formulas and satisfiability.
\newblock {\em Discrete Appl. Math.}, 157(5):1046--1068, 2009.

\bibitem{SS2000}
P.~Savick\'y and J.~Sgall.
\newblock {DNF} tautologies with a limited number of occurrences of every
  variable.
\newblock {\em Theoret. Comput. Sci.}, 238(1--2):495--498, 2000.

\bibitem{Scheder07}
D.~Scheder.
\newblock Unsatisfiable linear $k$-{CNF}s exist, for every $k$.
\newblock {\em CoRR}, abs/0708.2336, 2007.

\bibitem{Szeider2003}
S.~Szeider.
\newblock Homomorphisms of conjunctive normal forms.
\newblock {\em Discrete Appl. Math.}, 130(2):351--365, 2003.

\bibitem{Tovey84}
C.~A. Tovey.
\newblock A simplified {NP}-complete satisfiability problem.
\newblock {\em Discrete Appl. Math.}, 8(1):85--89, 1984.

\bibitem{Urquhart95}
A.~Urquhart.
\newblock The complexity of propositional proofs.
\newblock {\em Bulletin of Symbolic Logic}, 1:425--467, 1995.

\bibitem{emo-personal}
E.~Welzl.
\newblock personal communication.

\end{thebibliography}

\newpage
\appendix

\section{Proof of Theorem~\ref{theorem-b-linear}}

To prove the lower bound, we need an analog of
Lemma~\ref{lemma-frequent}.

\begin{lemma}
  Let $F$ be a weakly $b$-linear $k$-CNF formula. Denote by $\ell$ the
  number of literals occurring in at least $\focc(k-b) +1$ clauses.
  If ${\ell-1 \choose b} \leq \focc(k-b)$, then $F$ is satisfiable.
\label{b-frequent}
\end{lemma}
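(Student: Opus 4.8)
The plan is to mirror the proof of Lemma~\ref{lemma-frequent} (the case $b=1$), with the single modification that we now delete $b$ literals from each clause instead of one. We may assume $k>b$, since otherwise the statement is vacuous. First I would build a $(k-b)$-CNF formula $F'$ from $F$ by removing, from each clause $C\in F$, the set $P_C$ consisting of $b$ literals of maximum degree in $F$ (breaking ties arbitrarily). The goal is then to show that $\occ_{F'}(u)\le\focc(k-b)$ for every literal $u$: by definition of $\focc(k-b)$ this makes $F'$ satisfiable, and since every clause of $F'$ is a subclause of a clause of $F$, any satisfying assignment of $F'$ satisfies $F$.

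For the key estimate I would argue by contradiction: suppose $t:=\occ_{F'}(u)\ge\focc(k-b)+1$ for some literal $u$. Because $F'$ is obtained from $F$ only by deleting literals, $\occ_F(u)\ge\occ_{F'}(u)=t$, so $u$ occurs in at least $\focc(k-b)+1$ clauses of $F$, i.e.\ $u$ is one of the $\ell$ ``frequent'' literals. Pick $t$ distinct clauses $C_1,\dots,C_t\in F$ whose images in $F'$ all contain $u$; then $u\notin P_{C_i}$ and $\{u\}\cup P_{C_i}\subseteq C_i$ is a set of $b+1$ literals. Since $P_{C_i}$ consists of literals of maximum degree in $C_i$ and $u$ survived the deletion, every $v\in P_{C_i}$ satisfies $\occ_F(v)\ge\occ_F(u)\ge t\ge\focc(k-b)+1$, so every such $v$ is frequent as well.

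It then remains to carry out two steps. First, the sets $P_{C_1},\dots,P_{C_t}$ are pairwise distinct: if $P_{C_i}=P_{C_j}$ with $i\ne j$, then $\{u\}\cup P_{C_i}\subseteq C_i\cap C_j$ would force $|C_i\cap C_j|\ge b+1$, contradicting weak $b$-linearity. Second, each $P_{C_i}$ is a $b$-element subset of the set of frequent literals that are different from $u$, a set of size at most $\ell-1$. Combining the two, $t\le\binom{\ell-1}{b}\le\focc(k-b)$ by the hypothesis of the lemma, contradicting $t\ge\focc(k-b)+1$. This yields $\occ_{F'}(u)\le\focc(k-b)$ for all $u$, as desired.

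There is no real obstacle here; the argument is a routine generalization. The only points needing care are checking that all deleted literals $P_{C_i}$ land among the $\ell$ frequent literals (so that the counting universe genuinely has size $\le\ell-1$) and that weak $b$-linearity suffices to separate the $P_{C_i}$ — which it does, precisely because $\{u\}\cup P_{C_i}$ has $b+1$ elements. Conceptually, the price of allowing $b$-wise overlaps is that $t$ is bounded by $\binom{\ell-1}{b}$ instead of by $\ell-1$, which is exactly why the hypothesis is phrased with a binomial coefficient; this lemma then plays, for Theorem~\ref{theorem-b-linear}, the same role that Lemma~\ref{lemma-frequent} plays for Theorem~\ref{theorem-bounds}.
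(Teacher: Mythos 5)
Your proposal is correct and follows essentially the same route as the paper: build $F'$ by deleting $b$ maximum-degree literals per clause, bound $\occ_{F'}(u)$ by showing the deleted $b$-sets are distinct (via weak $b$-linearity applied to $\{u\}\cup P_{C_i}$) and all consist of frequent literals, giving $t\leq\binom{\ell-1}{b}\leq\focc(k-b)$, a contradiction. Your write-up even makes explicit the distinctness and counting-universe details that the paper only states in passing.
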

\begin{proof}
  Transform $F$ into a $(k-b)$-CNF formula $F'$ by removing in every
  clause in $F$ some $b$ literals of maximum degree, breaking ties
  arbitrarily.  We claim that $\occ_{F'}(u) \le \focc(k-b)$ for all
  literals $u$.  Since $F'$ is a $(k-b)$-CNF formula, this means that
  $F'$ is satisfiable, thus $F$ is, too.

  Suppose $u$ is a literal with $t := \occ_{F'}(u) \ge 1+\focc(k-b)$;
  this implies $\occ_F(u) \ge 1 + \focc(k-b)$. Let $C'_i$,
  $i=1,2,\ldots,t$, be the clauses in $F'$ with $C'_i \ni u$.  Let
  $v^{(i)}_1,\dots,v^{(i)}_b$ be the $b$ literals that were removed so
  that $C'_i$ was obtained. All these literals $v^{(i)}_j$ fulfill
  $\occ_F(v^{(i)}_j) \geq \occ_F(u) \geq 1+\focc(k-b)$. Observe that
  because $F$ is weakly $b$-linear, the sets
  $\{v^{(i)}_1,\dots,v^{(i)}_b\}$, $1 \leq i \leq t$ are $t$ distinct
  sets of cardinality $b$ not containing $u$. Therefore $t \leq
  {\ell-1 \choose b} \leq \focc(k-b)$. This is a contradiction since
  we assumed $t \ge 1+\focc(k-b)$.
\end{proof}

If $F$ is an unsatisfiable weakly $b$-linear $k$-CNF formula, then
by Lemma~\ref{b-frequent}, the number $\ell$ of literals
$u$ with $\occ_F(u) \geq \focc(k-b)+1$ fulfills
${\ell-1 \choose b} \geq \focc(k-b)+1$. Thus,
$$
\frac{2^{k-b}}{2ek} \leq \focc(k-b)+1 \leq {\ell-1\choose b}
\leq \frac{(\ell-1)^b}{b!} \ ,
$$
and, since $b! \geq 2^{b-1}$ for $b \geq 2$, we obtain $\ell \geq
\sqrt[b]{\frac{2^k}{4ek}} + 1$. $F$ contains at most $\ell$ many
literals of degree at least $\focc(k-b)+1$, therefore
$$
|F| > \frac{2^{k(1 + \frac{1}{b})}}{2^{b+2}e^2k^{2+\frac{1}{b}}} \ .
$$
This is the lower bound of Theorem~\ref{theorem-b-linear}.\\

For an upper bound, we construct a $b$-linear hypergraph $H$ with $n$
vertices and $m$ hyperedges such that $\frac{m}{n} \geq 2^k$.
Lemma~\ref{random-formula} is easily seen extend to $b$-linear
formulas and hypergraphs and yield an unsatisfiable formula with $m$
clauses. We need a generalization of Lemma~\ref{lemma-vdm}.

\begin{lemma}
  For any prime power $q$, any $k \in \mathbb{N}$, and $b \in
  \{1,\dots,k\}$, there exists a $k$-uniform $b$-linear hypergraph
  with $kq$ vertices and $q^{1 + b}$ edges.
  \label{lemma-b-vdm}
\end{lemma}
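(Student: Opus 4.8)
The plan is to generalize the Vandermonde construction of Lemma~\ref{lemma-vdm}, which is exactly the case $b=1$: there a $k$-tuple becomes a hyperedge when it satisfies a $(k-2)\times k$ Vandermonde-type linear system, and two vertices from distinct parts lie in a common hyperedge precisely because a $(k-2)\times(k-2)$ Vandermonde system has a unique solution. For general $b$ the idea is to use fewer equations, so that $b+1$ vertices rather than $2$ are needed to pin a hyperedge down. Concretely I would take $V=V_1\uplus\dots\uplus V_k$ with each $V_i$ a copy of $GF(q)$, fix $k$ distinct nodes $\alpha_1,\dots,\alpha_k\in GF(q)$, and declare a tuple $(x_1,\dots,x_k)$ with $x_i\in V_i$ to be a hyperedge iff $\sum_{j=1}^{k}\alpha_j^{\,r}x_j=0$ for all $r=0,1,\dots,k-b-2$. (Equivalently, up to fixed nonzero scalars per part these hyperedges are the graphs of the $q^{b+1}$ polynomials of degree at most $b$, i.e.\ a generalized Reed--Solomon code of minimum distance $k-b$, which is the conceptual reason $b$-linearity will hold.)

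The first step is to check that the coefficient matrix $(\alpha_j^{\,r})$ has full row rank $k-b-1$, since any $k-b-1$ of its columns form a square Vandermonde matrix on distinct nodes; hence the solution space has dimension $b+1$ over $GF(q)$, giving exactly $q^{1+b}$ hyperedges, each of size $k$ (one vertex per part), over $|V|=kq$ vertices. The second step is $b$-linearity: if two distinct hyperedges $e,f$ had $b+1$ common vertices, these would lie in $b+1$ distinct parts $V_{i_0},\dots,V_{i_b}$, and substituting their common values into the defining system would leave a square $(k-b-1)\times(k-b-1)$ Vandermonde system on the nodes indexed by $\{1,\dots,k\}\setminus\{i_0,\dots,i_b\}$, which has a unique solution; thus the remaining coordinates of $e$ and $f$ coincide, so $e=f$, a contradiction. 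Together these steps give the claimed $k$-uniform $b$-linear hypergraph with $kq$ vertices and $q^{1+b}$ edges, for $1\le b\le k-1$.

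The only point requiring real care --- the main obstacle beyond routine Vandermonde bookkeeping --- is the implicit hypothesis that $GF(q)$ contains $k$ distinct nodes, i.e.\ that $q$ is not too small relative to $k$; this is the same assumption already underlying Lemma~\ref{lemma-vdm}. In the parameter range where Lemma~\ref{lemma-b-vdm} is applied inside Theorem~\ref{theorem-b-linear} one verifies that the prime power $q\approx(k2^k)^{1/b}$ chosen there does satisfy $q\ge k$; for the remaining large-$b$ range one obtains the bound of Theorem~\ref{theorem-b-linear} more cheaply, using the construction for a smaller value of $b$ (still $b$-linear) or, when $b\ge k$, simply the complete formula $\mathcal{K}_k$, which is vacuously $k$-linear.
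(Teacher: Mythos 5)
Your proposal is correct and follows essentially the same route as the paper: the hyperedges are the solutions of a $(k-b-1)\times k$ Vandermonde system (one power-sum equation for each $r=0,\dots,k-b-2$), the edge count $q^{1+b}$ comes from the dimension of the solution space, and $b$-linearity follows because fixing $b+1$ coordinates leaves a square Vandermonde system with a unique solution. Your choice of arbitrary distinct nodes $\alpha_1,\dots,\alpha_k$ (rather than the paper's $1,\dots,k$) and your explicit remark that one needs $k$ distinct nodes in $GF(q)$ are in fact slightly more careful than the paper's own write-up, which leaves this assumption implicit.
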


\begin{proof}
  Choose the vertex set $V = V_1 \uplus \dots \uplus V_k$, where each
  $V_i$ is a disjoint copy of the finite field $GF(q)$. The hyperedges
  consist of all $k$-tuples $(x_1,\dots,x_k)$ with $x_i \in V_i, 1\leq
  i\leq k$, such that
  \begin{eqnarray}
  \left(
    \begin{array}{cccccc}
      1 & 1 &       &  1   &        & 1 \\
      1 & 2 & \dots &  i   &  \dots & k \\
      1 & 4 &       &  i^2 &        & k^2 \\      
      \vdots  & \vdots & & \vdots & & \vdots \\
      1 & 2^{k-b-2} & \dots & i^{k-b-2} & \dots & k^{k-b-2}\\
    \end{array}
  \right)
  \left(
    \begin{array}{c}
      x_1 \\
      x_2 \\
      \vdots\\
      x_i\\
      \vdots\\
      x_k
    \end{array}
  \right)      
   = \mathbf{0}  \ .
   \label{vdm-b}
 \end{eqnarray}

 The hyperedges form a $b$-linear hypergraph. To see this, consider
 $b+1$ distinct vertices $x_0 \in V_{i_0}, x_1 \in V_{i_1}, \dots, x_b
 \in V_{i_b}$.  How many hyperedges contain all of these $b+1$
 vertices? If the indices $i_0,\dots,i_b$ are not distinct, none does.
 Otherwise, we plug in the fixed values $x_0, \dots, x_b$ into
 (\ref{vdm-b}). We obtain a (possibly non-uniform)
 $(k-b-1)\times(k-b-1)$ linear system with a Vandermonde matrix, which
 has a unique solution. Hence those $b+1$ vertices are in exactly one
 common hyperedge.  By the same argument, there are exactly $q^{1+b}$
 hyperedges.
\end{proof}

We choose a prime power $q$ such that $\sqrt[b]{k2^k} \leq q <
2\sqrt[b]{k2^k}$. By Lemma~\ref{lemma-b-vdm}, there is a $b$-linear
hypergraph with $n=kq$ vertices and $m=q^{1+b}$ hyperedges. Then
$\frac{m}{n} = \frac{q^b}{k} \geq 2^k$, and by
Lemma~\ref{random-formula}, there is an unsatisfiable $b$-linear
hypergraph with at most $m \leq 2^{b+1} (k2^k)^{1+\frac{1}{b}}$. This
proves Theorem~\ref{theorem-b-linear}.

\end{document}